\documentclass[a4paper,11pt]{article}

\usepackage[utf8]{inputenc}
\usepackage[OT1]{fontenc}
\usepackage{amsmath,amssymb,amsthm,bm}
\usepackage{authblk}
\usepackage{hyperref}
\usepackage{todonotes}
\usepackage{enumitem}
\usepackage[margin=3cm]{geometry}

\newcommand{\N}{{\mathbb{N}}}
\newcommand{\R}{{\mathbb{R}}}
\newcommand{\Z}{{\mathbb{Z}}}
\newcommand{\cF}{{\mathcal{F}}}

\newcommand{\cO}{{\mathcal{O}}}
\newcommand{\cS}{{\mathcal{S}}}

\newcommand{\zero}{\mathbf{0}}
\newcommand{\one}{\mathbf{1}}
\newcommand{\la}{\langle}
\newcommand{\ra}{\rangle}
\newcommand{\lin}{\operatorname{lin}}
\newcommand{\xc}{\operatorname{xc}}
\newcommand{\sxc}{\operatorname{sxc}}
\newcommand{\conv}{\operatorname{conv}}

\newcommand{\vc}{\operatorname{VC}}
\newcommand{\lat}{\Lambda}
\newcommand{\cl}{\operatorname{cl}}

\newcommand{\psd}{\mathcal{S}_{\geq0}^d}
\newcommand{\np}{\mathrm{NP}}
\newcommand{\conp}{\mathrm{coNP}}

\theoremstyle{plain}
\newtheorem{thm}{Theorem}

\newtheorem{lem}[thm]{Lemma}

\newtheorem{ques}[thm]{Question}

\title{\textbf{Lifts for Voronoi cells of lattices}}

\author[1]{Matthias Schymura}
\author[2]{Ina Seidel}
\author[2]{Stefan Weltge}
\affil[1]{BTU Cottbus-Senftenberg, Germany}
\affil[2]{Technical University of Munich, Germany}
\date{}

\begin{document}

\maketitle

\begin{abstract}
Many polytopes arising in polyhedral combinatorics are linear projections of higher-dimensional polytopes with significantly fewer facets.
Such lifts may yield compressed representations of polytopes, which are typically used to construct small-size linear programs.
Motivated by algorithmic implications for the closest vector problem, we study lifts of Voronoi cells of lattices.

We construct an explicit $d$-dimensional lattice such that every lift of the respective Voronoi cell has $2^{\Omega(d / \log d)}$ facets.
On the positive side, we show that Voronoi cells of $d$-dimensional root lattices and their dual lattices have lifts with $\cO(d)$ and $\cO(d \log d)$ facets, respectively.
We obtain similar results for spectrahedral lifts.
\end{abstract}

\section{Introduction}
\label{sec:intro}

Many polytopes that arise in the study of polyhedral combinatorics are linear projections of higher-dimensional polytopes, also called \emph{lifts}, with significantly fewer facets.
Prominent examples include basic polytopes such as permutahedra~\cite{goemans2015smallest}, cyclic polytopes~\cite{bogomolov2015small}, and polygons~\cite{shitov2014sublinear}, as well as several polytopes associated to combinatorial optimization problems such as spanning tree polytopes~\cite{martin1991using,wong1980integer}, subtour-elimination polytopes~\cite{yannakakis1991expressing}, stable set polytopes of certain families of graphs~\cite{faenza2012separating,pulleyblank1993formulations,conforti2020extended}, matching polytopes of bounded-genus graphs~\cite{gerards1991compact}, independence polytopes of regular matroids~\cite{aprile2019regular}, or cut dominants~\cite{conforti2013extended}.

In this work, we study to which extent this phenomenon also applies to Voronoi cells of lattices.
Here, a \emph{lattice} is the image of $\Z^d$ under a linear map.
We say that a lattice is $k$-dimensional, if $k$ is the dimension of its linear hull.
The \emph{Voronoi cell} $\vc(\lat)$ of a lattice $\lat \subseteq \R^d$ is the set of all points in $\lin(\lat)$ for which the origin is among the closest lattice points, i.e.,
\[
\vc(\lat) := \left\{ x \in \lin(\lat) : \|x\| \le \|x - z\| \text{ for all } z \in \lat \right\},
\]
where $\lin(\cdot)$ denotes the linear hull and $\|\cdot\|$ denotes the Euclidean norm.
The lattice translates $z + \vc(\lat)$, $z \in \lat$, induce a facet-to-facet tiling of~$\lin(\lat)$, so that in particular Voronoi cells of lattices are what is commonly called \emph{space tiles}, see Figure~\ref{fig:lattice}.
Moreover, it is known that $\vc(\lat)$ is a centrally symmetric polytope with up to $2(2^d-1)$ facets.
We refer to~\cite[Ch.~32]{gruber2007convex} for background on translative tilings of space.

It is tempting to believe that the rich structure of Voronoi cells of lattices allows to construct polytopes with significantly fewer than $2(2^d-1)$ facets and that linearly project onto $\vc(\lat)$.
In fact, this is true for several examples: A lattice whose Voronoi cell has the largest possible number of facets is the dual root lattice~$A_d^\star$ (see Section~\ref{sec:rootLat} for a definition).
However, its Voronoi cell is a permutahedron and admits a lift with only $\cO(d \log d)$ facets~\cite{goemans2015smallest}, see Section~\ref{sec:rootLat}.
More generally, if the Voronoi cell of a $d$-dimensional lattice is a zonotope, then it has $\cO(d^2)$ generators and hence has a lift with $\cO(d^2)$ facets.
We discuss this result in detail in Section~\ref{sec:zono}.

\begin{figure}[t]
	\centering
	\includegraphics[width=0.6\textwidth]{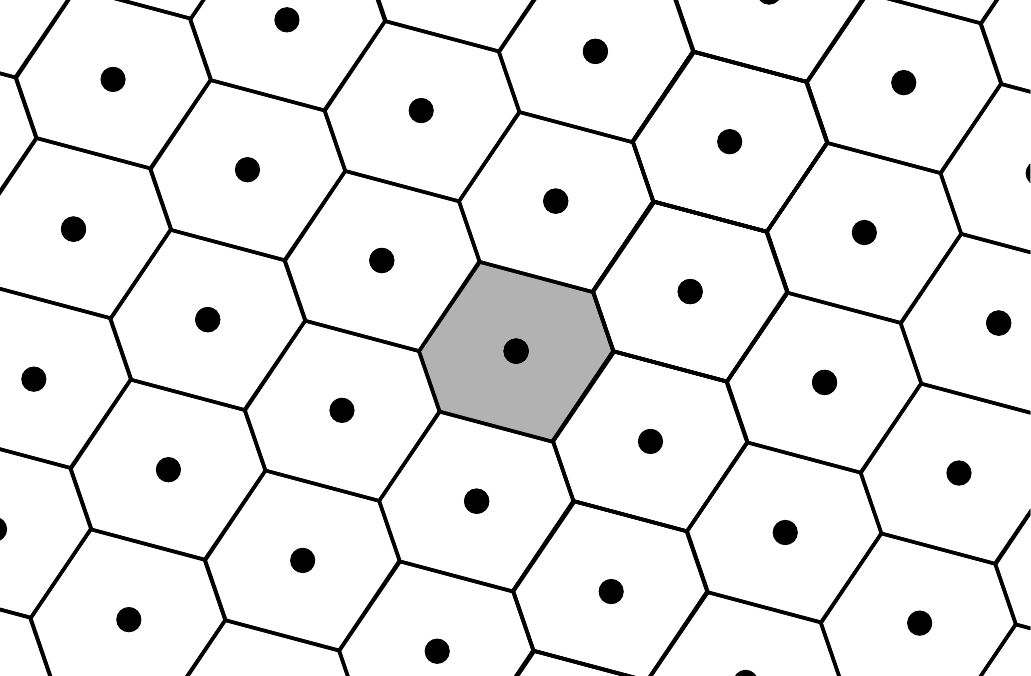}
	\caption{A lattice in $\R^2$ together with its Voronoi cell and the corresponding tiling of the plane via its lattice translates.}
	\label{fig:lattice}
\end{figure}

The lattice $A_d^\star$ also belongs to the prominent class of \emph{root lattices} and their duals.
By their algebraic and geometric properties, these lattices are prime examples in various contexts:
For example, they play a crucial role in Coxeter's classification of reflection groups (cf.~\cite[Ch.~4]{conwaysloane1999splag}), and they yield the densest sphere packings and thinnest sphere coverings in small dimensions (see~\cite{conwaysloane1999splag} or~\cite{schurmann2009computational}).

As one part of our work, we show that Voronoi cells of such lattices generally admit small lifts.
In what follows, for a polytope $P$ we write $\xc(P)$ for the minimum number of facets of any polytope that can be linearly projected onto $P$.
This number is called the \emph{extension complexity} of~$P$.

\begin{thm}
	\label{thm:xc-root-lattice}
    For every $d$-dimensional lattice $\lat$ that is a root lattice or the dual of a root lattice, we have $\xc(\vc(\lat)) = \cO(d \log d)$.
\end{thm}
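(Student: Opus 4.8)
The plan is to reduce everything to the irreducible root lattices and then write down an explicit small lift in each remaining case. Recall that every root lattice $\lat$ is an orthogonal direct sum $\lat = \lat_1 \oplus \cdots \oplus \lat_k$ of irreducible root lattices, each of type $A_{n_i}$ ($n_i \ge 1$), $D_{n_i}$ ($n_i \ge 4$), $E_6$, $E_7$, or $E_8$, and that then $\lat^\star = \lat_1^\star \oplus \cdots \oplus \lat_k^\star$. Because the squared Euclidean distance splits along an orthogonal decomposition, one has $\vc(\lat_1 \oplus \cdots \oplus \lat_k) = \vc(\lat_1) \times \cdots \times \vc(\lat_k)$, and since $\xc(P \times Q) \le \xc(P) + \xc(Q)$ it is enough to bound $\xc(\vc(\lat_i))$ and $\xc(\vc(\lat_i^\star))$ for irreducible $\lat_i$ of dimension $d_i$: as $\sum_i d_i = d$, bounds of the form $\cO(d_i \log d_i)$ sum to $\cO(d \log d)$. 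The types $E_6, E_7, E_8$ and their duals have dimension at most $8$, so their Voronoi cells are among finitely many fixed polytopes and satisfy $\xc = \cO(1)$. Thus only the four families $A_n$, $D_n$, $A_n^\star$, $D_n^\star$ remain.

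For these, the first step is to record the facet descriptions of the Voronoi cells. They are classical (see~\cite[Ch.~21]{conwaysloane1999splag}) and follow from the standard criterion that a nonzero $v \in \lat$ gives a facet of $\vc(\lat)$ if and only if $\pm v$ are the unique shortest vectors in the coset $v + 2\lat$; carrying this out yields, up to an affine transformation,
\begin{align*}
\vc(A_n) &= \{x \in \R^{n+1} : \textstyle\sum_i x_i = 0,\ x_i - x_j \le 1\ \text{for all}\ i \ne j\}, \\
\vc(D_n) &= \{x \in \R^n : |x_i| + |x_j| \le 1\ \text{for all}\ i \ne j\}, \\
\vc(D_n^\star) &= \{x \in \R^n : |x_i| \le \tfrac12\ \text{for all}\ i,\ \textstyle\sum_i |x_i| \le \tfrac n4\},
\end{align*}
while $\vc(A_n^\star)$ is (affinely) a permutahedron on $n+1$ points.

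The second step is to lift these. The permutahedron admits a lift with $\cO(n \log n)$ facets by Goemans~\cite{goemans2015smallest}, and this is exactly what forces the logarithmic factor in the statement. For the other three I would produce lifts with only $\cO(n)$ facets. For $A_n$: in the presence of $\sum_i x_i = 0$, the inequalities $x_i - x_j \le 1$ hold for all $i,j$ precisely when there is a common ``level'' $t$ with $t - 1 \le x_i \le t$ for all $i$ (which automatically forces $0 \le t \le 1$), so $\vc(A_n)$ is the image under coordinate projection of the polytope $\{(x,t) : \sum_i x_i = 0,\ 0 \le t \le 1,\ t-1 \le x_i \le t\ \forall i\}$. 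For $D_n$: the constraints $|x_i| + |x_j| \le 1$ for all $i \ne j$ say exactly that the sum of the two largest of $|x_1|,\dots,|x_n|$ is at most $1$, and by the well-known linear-programming formula for the sum of the $k$ largest coordinates this holds if and only if there exist $s$ and $w$ with $w_i \ge x_i - s$, $w_i \ge -x_i - s$, $0 \le w_i \le 1$ for all $i$, $0 \le s \le \tfrac12$, and $2s + \sum_i w_i \le 1$; projecting out $(s,w)$ recovers $\vc(D_n)$. For $D_n^\star$: introducing $z_i = |x_i|$ exhibits $\vc(D_n^\star)$ as the projection of $\{(x,z) : z_i \ge x_i,\ z_i \ge -x_i,\ z_i \le \tfrac12\ \text{for all}\ i,\ \sum_i z_i \le \tfrac n4\}$. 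This last case is perhaps the most striking, since $\vc(D_n^\star)$ itself has $2^{\Omega(n)}$ facets.

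The main obstacle I anticipate is not any single computation but getting the Voronoi cells exactly right — that is, determining the complete set of Voronoi-relevant vectors — for $A_n$, $D_n$, and especially $D_n^\star$, since an incomplete facet list would silently break the lifts. Among the constructions, the lift for $\vc(D_n)$ is the least transparent, as it depends on recognizing the facet inequalities as a ``sum of the two largest'' constraint and invoking the corresponding compact formulation. The remaining ingredients — the decomposition into irreducibles, the product formula for Voronoi cells, subadditivity of $\xc$, and the use of~\cite{goemans2015smallest} — are routine.
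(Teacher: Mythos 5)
Your proposal is correct and follows the same overall skeleton as the paper's proof: decompose into irreducible components using the product formulas for Voronoi cells and dual lattices, dismiss $E_6,E_7,E_8$ and their duals as constant-size, handle $\vc(A_n^\star)$ as a permutahedron via Goemans' $\cO(n\log n)$ lift (the source of the $\log$ factor), and establish $\cO(n)$ bounds for the remaining infinite families. Where you genuinely differ is in how those $\cO(n)$ bounds are obtained. The paper passes to the dual polytope, using $\xc(P)=\xc(P^\circ)$ together with the explicit descriptions $\vc(A_n)^\circ = S + (-S)$ (Minkowski sum of a simplex and its negative) and $\vc(D_n)^\circ = $ the intersection of a cube with a crosspolytope, and then applies subadditivity of $\xc$ under Minkowski sums and intersections; for $\bar D_n^\star = 2D_n^\star$ it uses the cube-and-crosspolytope intersection on the primal side. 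You instead build explicit primal extended formulations directly from the facet descriptions: the auxiliary level variable $t$ for $\vc(A_n)$, the sum-of-the-two-largest LP formulation for $\vc(D_n)$, and the $z_i = |x_i|$ lift for $\vc(D_n^\star)$; I checked these and they are valid, with $\cO(n)$ inequalities each. Your route is more hands-on and avoids polarity entirely, while the paper's is shorter given its toolbox lemmas and reuses the dual viewpoint that is central to its lower-bound construction. Your facet descriptions agree with those the paper cites from Conway--Sloane, so the risk you flag about incomplete facet lists does not materialize. One small definitional point: under the paper's definition, roots may have squared length $1$, so $\Z^n$ is also an irreducible root lattice and is missing from your ADE list; this is harmless, since $\vc(\Z^n)$ is a cube with $2n$ facets and $\Z^n$ is self-dual, so the $\cO(d\log d)$ bound is unaffected.
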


This raises the question whether Voronoi cells of other lattices also have a small extension complexity, say, polynomial in their dimension.
One of the main motivations for representing a polytope $P$ as the projection of another polytope $Q$ is that a linear optimization problem over $P$ can be reduced to one over $Q$.
If $Q$ has a small number of facets, then the latter task can be expressed as a linear program with a small number of inequalities, also known as an \emph{extended formulation}.

Thus, given a lattice $\lat \subseteq \R^d$ whose Voronoi cell has a small extension complexity, we may phrase any linear optimization problem over $\vc(\lat)$ as a small-size linear program.
Such a representation may have several algorithmic consequences for the \emph{closest vector problem}.
In this problem, one is given $\lat$ and a point $x \in \R^d$ and is asked to determine a lattice point that is closest to $x$, i.e., a point in
\[
    \cl(x,\lat) := \left\{z \in \lat : \|x-z\| \le \|x-z'\| \text{ for all } z' \in \lat \right\}.
\]
Note that $z \in \cl(x,\lat)$ if and only if $x - z \in \vc(\lat)$.
Thus, a small extension complexity of $\vc(\lat)$ would yield a small-size linear program to test whether a lattice point is the closest lattice vector to $x$.
However, in view of the fact that the closest vector problem is $\np$-hard~\cite{van1981another} and the belief that $\np \ne \conp$, we do not expect efficient algorithms that, for general lattices (given in form of a basis), decide whether a point is the closest lattice vector to $x$.
Another sequence of algorithmic implications arises from the algorithm of Micciancio \& Voulgaris~\cite{micciancio2013deterministic}, which also motivated other recent work on compact representations of Voronoi cells, such as~\cite{hunkenschroederreulandschymura2020compact}, see also~\cite[\S~3.7]{hunkenschroderphd}.

We remark that the mere \emph{existence} of small size extended formulations of Voronoi cells may not be immediately applicable, since finding such representations as well as verifying that they indeed yield the Voronoi cell of a given lattice might be hard.
Thus, polynomial bounds on the extension complexities of Voronoi cells of general lattices would not contradict hardness assumptions in complexity theory.
In fact, we initially considered the possibility of such bounds.

However, as our main result we explicitly construct lattices with Voronoi cells of extension complexity close to the trivial upper bound $2(2^d-1)$.

\begin{thm} \label{thmLowerBound}
    There exists a family of $d$-dimensional lattices $\lat$ such that $\xc(\vc(\lat)) = 2^{\Omega(d / \log d)}$.
\end{thm}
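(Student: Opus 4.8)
The $\np$-hardness of the closest vector problem says nothing directly about extension complexity (as discussed above), so the lower bound has to come from the standard machinery for bounding $\xc$ from below. My plan is a reduction: for a parameter $n$, I would exhibit an explicit lattice $\lat_n$ of dimension $d$ together with a face of $\vc(\lat_n)$ that maps affinely onto an explicit polytope $P_n$ whose extension complexity is already known to be large, the parameters being arranged so that $\xc(P_n)=2^{\Omega(d/\log d)}$.

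First I would record the two monotonicity facts that make this work: $\xc(F)\le\xc(P)$ whenever $F$ is a face of a polytope $P$, and $\xc(L(P))\le\xc(P)$ for an affine image $L(P)$ of $P$; together these give $\xc(\vc(\lat_n))\ge\xc(P_n)$ once a face of $\vc(\lat_n)$ is known to map affinely onto $P_n$. For $P_n$ I would take an explicit family of $0/1$-polytopes with a known exponential lower bound on extension complexity --- say stable-set polytopes of a suitable explicit family of graphs, or cut- or correlation-type polytopes --- and balance the parameters so that the ambient dimension of $P_n$ together with the dimensional overhead of the lattice encoding is $\Theta(d)$ while $\xc(P_n)$ is still $2^{\Omega(d/\log d)}$.

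The heart of the matter is the construction of $\lat_n$ from the combinatorial data defining $P_n$. I would build a Gram matrix so that a prescribed set $S$ of Voronoi-relevant vectors of $\lat_n$ cuts out the face
\[
F \;=\; \vc(\lat_n)\cap\bigcap_{v\in S}\Bigl\{\,x:\la v,x\ra=\tfrac12\|v\|^2\,\Bigr\},
\]
and so that the remaining Voronoi inequalities $\la v,x\ra\le\tfrac12\|v\|^2$, restricted to $\operatorname{aff}(F)$, reproduce exactly the facet description of an affine copy of $P_n$. Note that one has essentially no freedom in the offsets, since a Voronoi facet of $\lat_n$ has normal $v\in\lat_n$ and offset $\tfrac12\|v\|^2$ (so $\vc(\lat_n)$ itself cannot be an arbitrary polytope, and it is the passage to a face and the restriction to $\operatorname{aff}(F)$ that frees the offsets up); hence the combinatorics of $P_n$ must be hidden inside the Gram matrix, and I expect that realizing all the required inequalities simultaneously --- with norms controlled and without creating spurious short lattice vectors --- forces an extra block of about $\log n$ coordinates per ``unit'' of the gadget, which is where the logarithmic factor in the statement originates. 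Correctness would be checked via Voronoi's criterion, that a nonzero $v\in\lat_n$ is Voronoi-relevant iff $\pm v$ are the unique shortest vectors of the coset $v+2\lat_n$: one direction shows each intended normal is relevant (hence gives a genuine facet), and the more delicate direction shows no unintended coset has a short representative that would trim $F$ or change its vertex set.

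The main obstacle is exactly this construction and its verification. Voronoi cells are very rigid --- centrally symmetric space tiles whose facet normals must be relevant vectors of the prescribed norm --- so a face cannot be prescribed freely, and the case analysis needed to certify that the gadget lattice produces precisely $P_n$ (no extra facets, no missing vertices) while keeping $d$ only a constant or logarithmic factor larger than what $P_n$ already demands is the delicate part. Once this is in place the theorem is immediate, as $\xc(\vc(\lat_n))\ge\xc(P_n)=2^{\Omega(d/\log d)}$. A more hands-on variant that avoids an external $P_n$ would instead pin down, inside the slack matrix of a concrete $\vc(\lat_n)$, a unique-disjointness submatrix on $\Theta(d/\log d)$ coordinates and lower-bound its nonnegative rank by the usual hyperplane-separation argument; this still requires the same structural understanding of which nonzero vectors of $\lat_n$ are Voronoi-relevant and which points are vertices of $\vc(\lat_n)$.
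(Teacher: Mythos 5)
There is a genuine gap. Your proposal correctly sets up the standard reduction framework (monotonicity of $\xc$ under taking faces and affine images, plus an explicit hard $0/1$-family such as the G\"o\"os--Jain--Watson stable set polytopes), but the actual content of the theorem --- producing a lattice whose Voronoi cell contains the hard polytope --- is only sketched as a plan and is explicitly flagged by you as the ``main obstacle'' without being resolved. Nothing in the proposal shows that a Gram matrix with the required properties exists, and the rigidity you yourself point out (every facet of $\vc(\lat)$ has a lattice normal $v$ with forced offset $\tfrac12\|v\|^2$, relevance governed by the coset criterion) is exactly why this direct primal route is problematic; the paper remarks that it seems difficult to realize hard polytopes as affine images of faces of $\vc(\lat)$ itself. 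Your expectation that the logarithmic loss comes from a $\log n$ overhead per gadget in the lattice encoding also does not reflect how the bound is actually obtained: in the paper the logarithm comes solely from the $2^{\Omega(n/\log n)}$ lower bound for stable set polytopes of $n$-node graphs, while the lattice dimension is only $n+1$.

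The missing idea is to pass to the dual polytope. Since $\xc(P)=\xc(P^\circ)$ for polytopes with $\zero$ in the relative interior, it suffices to find the hard polytope inside $\vc(\lat)^\circ=\conv\{\tfrac{2}{\|z\|^2}z : z\in\lat\setminus\{\zero\}\}$, whose faces are easy to prescribe: whenever $\zero\in\cl(p,\lat)$, the set $\conv\{\tfrac{2}{\|z\|^2}z : z\in\cl(p,\lat)\setminus\{\zero\}\}$ is a face of $\vc(\lat)^\circ$. Hence one only needs a lattice $\lat$ and a point $p$ whose closest lattice vectors are prescribed $0/1$-points, all of the \emph{same} norm (so that the rescaling $z\mapsto\tfrac{2}{\|z\|^2}z$ is a single homothety and the face projects linearly onto their convex hull). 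The paper achieves this with an explicit lattice inside $\Z^k\times\alpha\Z$ cut out by two linear conditions and the point $p=(\zero,-\alpha)$; the verification is a short case analysis on the last coordinate, with no Gram-matrix engineering or analysis of Voronoi-relevant vectors. The equal-norm hypothesis is arranged for stable set polytopes by adding slack and complement variables so that every feasible $0/1$-vector has exactly $k+m$ ones, which costs only one extra lattice dimension beyond the number of graph vertices; this gives $d\le n+1$ and $\xc(\vc(\lat))=\xc(\vc(\lat)^\circ)\ge\xc(P_G)=2^{\Omega(d/\log d)}$. Without this dualization step (or a concrete substitute for your primal construction), the proposal does not constitute a proof.
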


Lower bounds on extension complexities have been established for various prominent polytopes in recent years.
Of particular note are results for cut polytopes~\cite{fiorini2015exponential,kaibel2015short,chan2016approximate}, matching polytopes~\cite{rothvoss2017matching}, and certain stable set polytopes~\cite{GJW}.
Lower bounds for other polytopes $Q$ are typically obtained by showing that a face $ F $ of $Q$ affinely projects onto one of the polytopes~$P$ from above and using the simple fact $\xc(P) \le \xc(F) \le \xc(Q)$. 
Unfortunately, it seems difficult to construct lattices for which this approach can be directly applied to the Voronoi cell.
However, we will exploit the lesser known fact that $\xc(Q) = \xc(Q^\circ)$ holds for every polytope $Q$ with the origin in its interior, where $Q^\circ$ is the dual polytope of $Q$.
In fact, we will describe a way to obtain many $0/1$-polytopes as projections of faces of dual polytopes of Voronoi cells of lattices.
As an example, for every $n$-node graph~$G$ we can construct a lattice $\lat$ of dimension at most $n+1$ such that the stable set polytope of $G$ is a projection of a face of $\vc(\lat)^\circ$.
Theorem~\ref{thmLowerBound} then follows from a construction of Göös, Jain \& Watson~\cite{GJW} of stable set polytopes with high extension complexity. 

Another prominent way of representing polytopes is via linear projections of feasible regions of \emph{semidefinite} programs, i.e., spectrahedra.
We will discuss how our approach also yields a version of Theorem~\ref{thmLowerBound} for such semidefinite lifts with a slightly weaker but still superpolynomial bound.

\paragraph{Outline} 
In Section~\ref{sec:prel}, we provide a brief introduction to lifts of polytopes and lattices, focusing on tools and properties that are  essential for our arguments in the following sections. 
In Section~\ref{sec:upperBound}, we derive upper bounds on the extension complexity of Voronoi cells for some selected classes of lattices, such as root lattices and their duals, zonotopal lattices, and a class of lattices that do not admit a compact representation in the sense of~\cite{hunkenschroederreulandschymura2020compact}.
The proof of Theorem~\ref{thmLowerBound} is given in Section~\ref{sec:lower}, and in Section~\ref{sec:semidef}, we briefly introduce semidefinite lifts and present a version of Theorem~\ref{thmLowerBound} with a superpolynomial bound on the semidefinite extension complexity.
We close our paper with a discussion of open problems in Section~\ref{sec:openquestions}.

\section{Preliminaries}
\label{sec:prel}

\subsection{Extension complexity: A toolbox}

Throughout this paper we only need basic facts regarding extension complexities of polytopes and most of them are well-known. For the sake of completeness, we provide proofs here. 
First, we start with a simple fact already mentioned in the introduction.

\begin{lem}
	\label{lem:xcFace}
	For every face $F$ of a polytope $P$, we have $\xc(F) \le \xc(P)$.
\end{lem}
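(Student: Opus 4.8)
The plan is to unpack the definition of extension complexity and build, from any extension of $P$, an extension of the face $F$ with no more facets. Concretely, suppose $Q$ is a polytope with $\xc(P)$ facets together with a linear (affine) map $\pi$ such that $\pi(Q) = P$. Since $F$ is a face of $P$, there is a linear functional $c$ and scalar $\delta$ with $F = \{x \in P : \la c, x\ra = \delta\}$ and $\la c, x \ra \le \delta$ valid on all of $P$. I would then set $G := \pi^{-1}(F) \cap Q = \{y \in Q : \la c, \pi(y)\ra = \delta\}$. The functional $y \mapsto \la c, \pi(y)\ra$ is linear in $y$, and since $\la c, \pi(y)\ra \le \delta$ holds for every $y \in Q$ (because $\pi(y) \in P$), the set $G$ is a face of $Q$.

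The key point is then that $\pi(G) = F$. One inclusion is immediate: $\pi(G) \subseteq \pi(\pi^{-1}(F)) \subseteq F$. For the reverse, take $x \in F \subseteq P$; since $\pi(Q) = P$ there is $y \in Q$ with $\pi(y) = x$, and because $\la c, x\ra = \delta$ this $y$ lies in $G$, so $x \in \pi(G)$. Hence $G$ is a polytope that linearly projects onto $F$. Finally, a face of a polytope with $k$ facets has at most $k$ facets — each facet of $G$ is the intersection of $G$ with a facet-defining halfspace of $Q$ (restricting the facet description of $Q$ to the affine hull of $G$ can only merge or drop inequalities, never create new ones). Therefore $\xc(F) \le (\text{number of facets of } G) \le (\text{number of facets of } Q) = \xc(P)$.

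I expect the only genuinely delicate point to be the very last step: arguing cleanly that passing to a face does not increase the facet count. The cheap and rigorous way to see this is that if $Q = \{y : \la a_i, y\ra \le b_i,\ i \in [k]\}$ and $G = Q \cap \{y : \la a, y\ra = \beta\}$ is a face, then $G = \{y : \la a, y\ra = \beta,\ \la a_i, y \ra \le b_i,\ i \in [k]\}$, an inequality description of $G$ with $k$ inequalities; since the minimal number of facets of a polytope is at most the number of inequalities in any of its descriptions, $G$ has at most $k$ facets. Everything else is bookkeeping with the definitions, so the proof should be short.
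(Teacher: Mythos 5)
Your argument is correct and follows essentially the same route as the paper: take an extension $Q$ of $P$, pass to the face $\pi^{-1}(F)\cap Q$ of $Q$, observe it projects onto $F$ and has at most as many facets as $Q$. You have simply spelled out the details (why the preimage-intersection is a face, why it maps onto $F$, and why a face cannot have more facets) that the paper's one-line proof leaves implicit.
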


\begin{proof}
	If $P$ is the image of a polyhedron $Q$ with $k$ facets under a linear map~$\tau$, then $F$ is the image of $ \tau^{-1}(F) \cap Q $, which is a face of $Q$ and hence has at most $k$ facets.
\end{proof}

For the next fact we need the notion of a slack matrix of a polytope. 
To this end, we consider a polytope $ P = \{ x \in \R^d : \la a_i , x \ra \le b_i, i \in [m] \} = \conv\{v_1, \dots, v_n\} $, where $[m] := \{1, \dots, m\}$ and $ \la \cdot, \cdot \ra $ denotes the standard Euclidean scalar product. Corresponding to these two descriptions of $ P $, we define the \emph{slack matrix} $ S = (S_{i,j}) \in \R_{\ge 0}^{m \times n} $ via $ S_{i,j} = b_i - \la a_i , v_j \ra $. 
Yannakakis~\cite{yannakakis1991expressing} showed that the extension complexity $ \xc(P) $ of $ P $ equals the \emph{nonnegative rank} of $ S $, which is the smallest number~$ r $ such that $ S = FV$, where $ F \in \R_{\ge 0}^{m \times r} $ and $ V \in \R_{\ge 0}^{r \times n} $, and which is denoted by $ r_+(S) $. 

For a polytope $P$ containing the origin $ \zero $ in its relative interior, the \emph{dual polytope} of $P$ is defined as
\[
P^\circ := \{y \in \lin(P) : \la x,y \ra \le 1 \text{ for all } x \in P\}.
\]
It is a basic fact that $P^\circ$ is again a polytope with the origin in its relative interior, $\lin(P^\circ) = \lin(P)$, and $(P^\circ)^\circ = P$.
Moreover, it is easy to see that if 
\[
P = \{ x \in \lin(P) : \la w_i , x \ra \le 1, i \in [m] \} = \conv\{v_1, \dots, v_n\},
\]
then 
\begin{equation}
	\label{eq:dualPolytope}
	P^\circ = \{y \in \lin(P) : \la v_i , x \ra \le 1, i \in [n]\} = \conv \{ w_1, \dots, w_m \}.
\end{equation} 
In particular, this shows that if $S$ is a slack matrix of $P$ induced by $ v_1, \dots, v_n$ and $w_1, \dots, w_m$, then $S^\intercal$ is a slack matrix of $P^\circ$.
Since $r_+(S) = r_+(S^\intercal)$ we obtain the following fact.

\begin{lem}
\label{lem:xc-dual}
	For every polytope $ P \subseteq \R^d $ that contains the origin in its relative interior, we have
	\[
		\xc(P) = \xc(P^\circ).
	\]
\end{lem}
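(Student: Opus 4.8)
The plan is to prove that $\xc(P) = \xc(P^\circ)$ by using Yannakakis's theorem, which identifies the extension complexity of a polytope with the nonnegative rank of (any) slack matrix, together with the elementary fact that nonnegative rank is invariant under transposition. So the real content is to exhibit a single matrix that is simultaneously a slack matrix of $P$ and, after transposition, a slack matrix of $P^\circ$.

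First I would fix a description of $P$ by both facets and vertices, namely $P = \{x \in \lin(P) : \la w_i, x\ra \le 1, \ i \in [m]\} = \conv\{v_1,\dots,v_n\}$. Here it is convenient that the origin lies in the relative interior of $P$, since this guarantees that all facet-defining inequalities can be scaled to have right-hand side $1$; this is exactly what makes the slack entries $S_{i,j} = 1 - \la w_i, v_j\ra$ symmetric in the roles of the $w$'s and $v$'s. Then I would invoke the displayed formula~\eqref{eq:dualPolytope}, which tells us that $P^\circ = \{y \in \lin(P) : \la v_j, y\ra \le 1, \ j \in [n]\} = \conv\{w_1,\dots,w_m\}$. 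Reading off the slack matrix of $P^\circ$ from \emph{this} pair of descriptions, its $(j,i)$ entry is $1 - \la v_j, w_i\ra = S_{i,j}$, so the slack matrix of $P^\circ$ is precisely $S^\intercal$.

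To finish, I would combine three ingredients: Yannakakis's identity $\xc(P) = r_+(S)$, the same identity applied to $P^\circ$ giving $\xc(P^\circ) = r_+(S^\intercal)$, and the fact that a factorization $S = FV$ with $F, V \ge 0$ transposes to a factorization $S^\intercal = V^\intercal F^\intercal$ with $V^\intercal, F^\intercal \ge 0$ of the same inner dimension, whence $r_+(S) = r_+(S^\intercal)$. Chaining these equalities yields $\xc(P) = \xc(P^\circ)$.

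Honestly, there is no serious obstacle here: every step is either a cited theorem or a one-line observation, and in fact the excerpt has already done most of the bookkeeping in the paragraph preceding the lemma. The only point deserving a moment's care is making sure the description~\eqref{eq:dualPolytope} is applied with the \emph{same} scaling of facet inequalities that was used to define $S$, so that the entries match on the nose rather than up to a diagonal rescaling — though even a diagonal rescaling would not change the nonnegative rank, so the conclusion is robust. Thus the proof is essentially a two-line appeal to the machinery assembled above.
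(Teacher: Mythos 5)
Your proposal is correct and follows exactly the paper's argument: scale the facet inequalities to right-hand side $1$ (possible since $\zero$ is in the relative interior), observe via~\eqref{eq:dualPolytope} that $S^\intercal$ is then a slack matrix of $P^\circ$, and conclude with Yannakakis's theorem and the transposition-invariance of the nonnegative rank. No gaps and no meaningful difference from the paper's proof.
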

\noindent The next statement shows that the extension complexity behaves well under Cartesian products, Minkowski sums and intersections.

\begin{lem}
\label{lem:xc-simple-constructions}
If $P \subseteq \R^d$, $Q \subseteq \R^{d'}$ are polytopes, then
\begin{enumerate}[label=(\roman*)]
 \item[(i)] $\xc(P \times Q) \leq \xc(P) + \xc(Q)$.
\end{enumerate}
Moreover, if $d = d'$, then
\begin{enumerate}
 \item[(ii)] $\xc(P + Q) \leq \xc(P) + \xc(Q)$ and
 \item[(iii)]\label{lbl:xc-simple-constructions-intersection} $\xc(P \cap Q) \leq \xc(P) + \xc(Q)$.
\end{enumerate}
\end{lem}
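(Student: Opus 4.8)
The plan is to prove part~(i) directly from the definition of extension complexity, and then obtain~(ii) and~(iii) by realising $P+Q$ as a linear image of a product lift and $P\cap Q$ as a projection of an intersection built from a product lift. Three elementary facts about facet counts do all the work: (a) for polytopes $A,B$ of positive dimension, the facets of $A\times B$ are exactly the sets $F\times B$ with $F$ a facet of $A$ together with the sets $A\times G$ with $G$ a facet of $B$, so $A\times B$ has $(\#\text{facets of }A)+(\#\text{facets of }B)$ facets; (b) if a polytope $R$ is the image of a polytope $S$ under a linear map $\tau$ and $\phi$ is any linear map, then $\phi(R)=(\phi\circ\tau)(S)$ is a linear image of $S$, so $\xc(\phi(R))\le\xc(R)$; and (c) if $L$ is a linear (or affine) subspace, then $R\cap L$, viewed inside $L$, is cut out by the restrictions of the facet-defining inequalities of $R$, hence has at most as many facets as $R$.

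For (i), I would pick polytopes $S_P,S_Q$ together with linear maps $\tau_P,\tau_Q$ realising the minima $\xc(P)$ and $\xc(Q)$. Then $(\tau_P,\tau_Q)$ maps $S_P\times S_Q$ onto $P\times Q$, and by~(a) the polytope $S_P\times S_Q$ has $\xc(P)+\xc(Q)$ facets, which gives the bound. For~(ii), observe that $P+Q$ is the image of $P\times Q$ under the linear map $(x,y)\mapsto x+y$; combining this with~(b) and~(i) yields $\xc(P+Q)\le\xc(P\times Q)\le\xc(P)+\xc(Q)$. (Equivalently, one can push $S_P\times S_Q$ directly onto $P+Q$ via $(s,t)\mapsto\tau_P(s)+\tau_Q(t)$.)

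For~(iii), with $S_P\subseteq\R^{m}$ and $S_Q\subseteq\R^{m'}$ as above, I would consider
\[
 E:=\bigl\{(x,s,t)\in\R^d\times\R^{m}\times\R^{m'}: s\in S_P,\ t\in S_Q,\ \tau_P(s)=x,\ \tau_Q(t)=x\bigr\},
\]
which is bounded since $s,t$ range over bounded sets and $x=\tau_P(s)$ is then determined. Projecting $E$ onto the $x$-coordinates gives exactly $P\cap Q$, because such an $x$ lies in the projection if and only if $x\in P$ and $x\in Q$. Now $E=(\R^d\times S_P\times S_Q)\cap L$, where $L$ is the linear subspace defined by the equations $\tau_P(s)=x=\tau_Q(t)$. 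By~(a) the polytope $\R^d\times S_P\times S_Q$ has $\xc(P)+\xc(Q)$ facets, and by~(c) intersecting with $L$ does not increase this count. Hence $E$ is a polytope with at most $\xc(P)+\xc(Q)$ facets that projects onto $P\cap Q$, which proves the claim.

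All of this is routine bookkeeping of facets; the one point that deserves a careful sentence is fact~(c) — that intersecting a polytope with a subspace creates no new facets — which underlies the bound in~(iii), together with the small observation in each case that the product construction really yields a bounded polytope and not just an unbounded polyhedron.
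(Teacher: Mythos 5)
Your proposal is correct and follows essentially the same route as the paper: product lifts with additive facet counts for (i), the map $(x,y)\mapsto x+y$ for (ii), and for (iii) a lift cut out inside a product by the linking equations, whose facet count does not exceed that of the product. Your explicit extra $x$-coordinate in $E$ is only a cosmetic difference from the paper's polyhedron $L=\{(y,z)\in P'\times Q':\pi(y)=\tau(z)\}$.
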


\begin{proof}
	(i): If $P'$ linearly projects onto $P$ and $Q'$ onto $Q$, then $P' \times Q'$ linearly projects onto $P \times Q$. Moreover, the number of facets of $P' \times Q'$ is equal to the sum of the number of facets of $P'$ and $Q'$. \\
	(ii): The polytope $P \times Q$ linearly projects onto $P+Q$ via $(p,q) \mapsto p+q$ for $(p,q) \in P \times Q$, and hence the claim follows from (i).\\
	(iii): If $P=\pi(P')$ and $Q = \tau(Q')$ hold for some polyhedra $P',Q'$ and linear maps $\pi,\tau$, then $P \cap Q$ is a linear image of the polyhedron $L = \{ (y,z) \in P' \times Q' : \pi(y) = \tau(z) \}$. Moreover, the number of facets of $L$ is at most the number of facets of $P' \times Q'$, which, again, is equal to the sum of the number of facets of $P'$ and $Q'$. 
\end{proof}

The next fact is a very useful result following from a work of Balas~\cite{balas1979disjunctive} deriving a description of the convex hull of the union of certain polytopes. The proof of the version presented here can be found in~\cite[Prop.~3.1.1]{weltge2015diss}.

 \begin{lem} 
 	\label{lem:xc-union}
 	Let $ P_1, \dots, P_k $ be polytopes, then
 	\[
 		\xc(\conv(P_1 \cup \ldots \cup P_k)) \leq \sum_{i=1}^k \xc(P_i) + \left|\left\{ i \in [k] : \dim(P_i) = 0 \right\}\right|.
 	\]
 \end{lem}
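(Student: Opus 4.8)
The plan is to combine Balas's description of the convex hull of a union of polytopes with a passage to homogenization cones: in the homogenized picture the convex hull of the union becomes a Minkowski sum, which is well-behaved under $\xc$, and the correction term counting $0$-dimensional pieces appears because the nonnegativity multiplier attached to each higher-dimensional $P_i$ is redundant there.

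First I would discard the empty $P_i$ — if all are empty there is nothing to prove — and assume every $P_i$ is nonempty. For each $i$, fix a linear map $\pi_i$ and a polyhedron $Q_i = \{y : C_i y \le d_i\}$ with $\pi_i(Q_i) = P_i$ and exactly $\xc(P_i)$ facets; when $\dim P_i \ge 1$ I would moreover take $Q_i$ to be bounded, using the standard fact that the extension complexity of a polytope is always attained by a bounded extension. Since $P_i$ is bounded, its homogenization cone $\hat P_i := \{(\lambda x,\lambda) : x\in P_i,\ \lambda\ge 0\}\subseteq\R^{d+1}$ is pointed, satisfies $\hat P_i\cap\{\lambda=1\}=P_i\times\{1\}$, and is the image of the polyhedral cone $\hat Q_i := \{(y,\lambda) : C_i y \le \lambda d_i,\ \lambda\ge 0\}$ under the linear map $(y,\lambda)\mapsto(\pi_i(y),\lambda)$; and a short computation shows that the homogenization cone of $\conv(P_1\cup\dots\cup P_k)$ equals $\hat P_1+\dots+\hat P_k$.

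Next I would assemble the extension. A Minkowski sum of cones is a linear image of their Cartesian product, so $\hat P_1+\dots+\hat P_k$ is a linear image of a polyhedral cone $K$ with at most $\sum_i(\text{number of facets of }\hat Q_i)$ facets. Since the last-coordinate functional does not vanish on $\hat P_1+\dots+\hat P_k$, the preimage in $K$ of the hyperplane $\{\lambda = 1\}$ is an affine hyperplane avoiding the origin, so $K$ intersected with it is described by $K$'s inequalities together with one equation and hence has at most as many facets as $K$; its image is $\conv(P_1\cup\dots\cup P_k)\times\{1\}$, and projecting away the last coordinate yields $\xc(\conv(P_1\cup\dots\cup P_k)) \le \sum_i(\text{facets of }\hat Q_i)$. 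Finally, $\hat Q_i$ is described by the $\xc(P_i)$ inequalities $C_i y \le \lambda d_i$ together with the single inequality $\lambda\ge 0$, so it has at most $\xc(P_i)+1$ facets; it therefore suffices to show that $\lambda\ge 0$ is redundant whenever $\dim P_i\ge 1$, as summing the resulting bounds over $i$ gives exactly the claimed inequality.

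This redundancy is the crux, and — together with the bounded-extension fact invoked above — it is where I expect the real work to lie. By Farkas's lemma, $\lambda\ge 0$ follows from $C_i y\le\lambda d_i$ precisely when $\{y : C_i y\ge d_i\}=\emptyset$, that is, when some $\mu\ge 0$ satisfies $\mu^\intercal C_i = 0$ and $\mu^\intercal d_i > 0$. Because $Q_i$ is bounded, its recession cone is trivial, so the rows of $C_i$ positively span the whole space and there is $\mu > 0$ with $\mu^\intercal C_i = 0$; since $Q_i\neq\emptyset$, such a $\mu$ cannot have $\mu^\intercal d_i < 0$. If it had $\mu^\intercal d_i = 0$, then every $y\in Q_i$ would satisfy $\mu^\intercal(d_i - C_i y) = 0$ with $d_i - C_i y \ge 0$ and $\mu > 0$, forcing $C_i y = d_i$; thus $Q_i$ would be a bounded nonempty affine subspace, that is, a single point, contradicting $\dim Q_i\ge\dim P_i\ge 1$. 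Hence $\mu^\intercal d_i > 0$, which is the desired redundancy. The remaining steps are routine bookkeeping about how homogenization, Cartesian products, Minkowski sums, and hyperplane sections avoiding the origin affect facet counts.
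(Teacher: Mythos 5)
Your argument is correct and is essentially the proof the paper points to (Balas's disjunctive-programming lift, as worked out in the cited reference): homogenize minimal bounded extensions of the $P_i$, realize the hull of the union as a section of the product of the homogenized cones, and observe that the multiplier inequality $\lambda_i \ge 0$ is implied whenever $\dim P_i \ge 1$, which is exactly what produces the $+\left|\{i : \dim(P_i)=0\}\right|$ term. The only loose ends — that a minimal extension can be taken bounded and described by exactly $\xc(P_i)$ inequalities (i.e.\ full-dimensional, with the projection made linear via the homogenization) — are indeed the routine bookkeeping you flag, and your Farkas redundancy argument is sound.
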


\noindent We mentioned already that some lattices have a permutahedron as their Voronoi cell.
These polytopes arise from a single vector by permuting its coordinates in all possible ways and taking their convex hull.
Let us denote the set of all bijective maps on $[d]$ by $S_d$.
For a permutation $\pi \in S_d$ and a vector $v = (v(1), \dots, v(d)) \in \R^d$, let $ \pi(v) := (v(\pi(1)),\dots,v(\pi(d))) $ be the vector that arises from $v$ via permuting its entries according to $\pi$.

\begin{lem}
	\label{lem:xc-simple-constructions-perms}
	For every $v \in \R^d $ we have $ \xc(\conv\{ \pi(v) : \pi \in S_d \}) \le d^2$.
\end{lem}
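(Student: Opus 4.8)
The plan is to realize $P_v := \conv\{\pi(v) : \pi \in S_d\}$ as a linear image of the Birkhoff polytope of doubly stochastic matrices, exploiting that this polytope, although it has exponentially many vertices, is cut out by only $d^2$ inequalities. Concretely, consider
\[
B_d := \Bigl\{ M \in \R^{d\times d} : M_{i,j} \ge 0 \text{ for all } i,j \in [d],\ \textstyle\sum_{i} M_{i,j} = 1 \text{ for all } j,\ \sum_{j} M_{i,j} = 1 \text{ for all } i \Bigr\}.
\]
Apart from the $2d$ linear equations imposing the row and column sums, $B_d$ is described by the $d^2$ sign constraints $M_{i,j} \ge 0$; taking $B_d$ itself as a (trivial) lift, this already yields $\xc(B_d) \le d^2$.

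Next I would invoke the Birkhoff--von Neumann theorem, which states that $B_d = \conv\{ P_\pi : \pi \in S_d \}$, where $P_\pi \in \{0,1\}^{d\times d}$ is the permutation matrix with $(P_\pi)_{i,j} = 1$ exactly when $j = \pi(i)$. Consider the linear map $\phi \colon \R^{d\times d}\to\R^d$, $M \mapsto M v$. A direct computation gives $(P_\pi v)_i = v_{\pi(i)}$, so $\phi(P_\pi) = \pi(v)$ for every $\pi \in S_d$. Since a linear map sends the convex hull of finitely many points to the convex hull of their images, it follows that
\[
\phi(B_d) = \conv\{ \phi(P_\pi) : \pi \in S_d \} = \conv\{ \pi(v) : \pi \in S_d \} = P_v .
\]
Finally, extension complexity does not increase under a linear image — if a polyhedron projects onto $B_d$, then composing that projection with $\phi$ exhibits it as a lift of $P_v$ — so $\xc(P_v) \le \xc(B_d) \le d^2$, as claimed.

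The argument has no genuinely difficult step; the only input carrying real content is the ``hard'' direction of Birkhoff--von Neumann, namely that \emph{every} doubly stochastic matrix is a convex combination of permutation matrices. This is precisely what makes $\phi(B_d)$ equal to the full polytope $P_v$ rather than merely contain the points $\pi(v)$; the easy inclusion alone would not suffice. The only mild subtlety is the degenerate case where $v$ has repeated coordinates, in which $P_v$ is lower-dimensional — but every step above remains valid verbatim, since nothing in the reasoning used that $P_v$ is full-dimensional or that the $\pi(v)$ are distinct.
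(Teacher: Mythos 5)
Your proposal is correct and coincides with the paper's own argument: both realize $\conv\{\pi(v):\pi\in S_d\}$ as the image of the Birkhoff--von Neumann polytope $B_d$ (which has at most $d^2$ facets) under the linear map $X\mapsto Xv$, invoking Birkhoff--von Neumann to identify $B_d$ with $\conv\{P_\pi:\pi\in S_d\}$. No gaps.
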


\begin{proof}
For $\pi \in S_d$, let $ P(\pi) \in \{0,1\}^{d\times d}$ with $P(\pi)_{ij} = 1$ if and only if $ \pi(i) = j $, for all $i,j \in [d]$, be the associated permutation matrix.
It is easy to see that $\conv\{ \pi(v) : \pi \in S_d \}$ is the image of $ B_d := \conv\{ P(\pi)  : \pi \in S_d \} $ under the linear map $ \tau: \R^{d\times d} \rightarrow \R^d $ with $\tau(X)_i = \sum_{j=1}^d v_jX_{ij}$, for $i \in [d]$.
The latter polytope is the \emph{Birkhoff–von Neumann polytope} \cite{birkhoff1946tres, von1953certain} described via 
\begin{align*}
	B_d =
	\left\{X \in \R_{\ge 0}^{d \times d} : \sum_{i=1}^d X_{ij} = 1 \text{ for } j \in [d], \,
	\sum_{j=1}^d X_{ij} = 1 \text{ for } i \in [d] 
	\right\},
\end{align*}
which has $d^2$ facets. 
\end{proof}

Goemans~\cite{goemans2015smallest} showed that if $v = (1,2,\dots,d)$, then the above bound can be improved to $ \xc(\conv\{ \pi(v) : \pi \in S_d \}) = \Theta(d \log d)$.

\subsection{Lattices and Voronoi cells}

Most basic notions regarding lattices and their Voronoi cells have been already introduced in Section~\ref{sec:intro}. In this section, we provide some further definitions and results that we use to obtain bounds on the extension complexity of Voronoi cells of lattices.

We call two lattices $ \lat, \Gamma \subseteq \R^d$ \emph{isomorphic}, if there exists an orthogonal matrix $Q \in \R^{d \times d}$ such that $Q\lat = \Gamma$. Note that $\vc(\Gamma) = Q\vc(\lat) $ and therefore the extension complexities of their Voronoi cells coincide. 

In some parts, we will consider the \emph{dual lattice} of a lattice $\lat \subseteq \R^d$, which is defined as
\[
\lat^\star = \left\{ x \in \lin(\lat) : \la x,y \ra \in \Z \textrm{ for all } y \in \lat\right\}.
\]
Note that for every two lattices $\Lambda, \Gamma$, their product $ \Lambda \times \Gamma $ is also a lattice.
The following lemma shows that the Cartesian product behaves well with respect to Voronoi cells or duals of lattices.

\begin{lem}
	\label{lem:vc-lat-dual-product}
	For any two lattices $\Lambda \subseteq \R^d$ and $\Gamma \subseteq \R^{d'}$ we have
	\begin{enumerate}[label=(\roman*)]
		\item $\vc(\Lambda \times \Gamma) = \vc(\Lambda) \times \vc(\Gamma)$, and
		\item $(\Lambda \times \Gamma)^\star = \Lambda^\star \times \Gamma^\star$.
	\end{enumerate}
\end{lem}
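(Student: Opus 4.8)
The plan is to prove both identities by a direct double-inclusion argument, exploiting that the squared Euclidean norm and the standard scalar product decompose additively over the two factors. Throughout I would use the obvious identification $\lin(\Lambda \times \Gamma) = \lin(\Lambda) \times \lin(\Gamma)$, so that in both parts it suffices to compare the defining conditions on this common ambient space.

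For part~(i), fix a point $(x,y)$ with $x \in \lin(\Lambda)$ and $y \in \lin(\Gamma)$, and let $(z,w) \in \Lambda \times \Gamma$ be arbitrary. The Pythagorean identity gives $\|(x,y)\|^2 = \|x\|^2 + \|y\|^2$ and $\|(x,y)-(z,w)\|^2 = \|x-z\|^2 + \|y-w\|^2$, so $(x,y) \in \vc(\Lambda \times \Gamma)$ is equivalent to $\|x\|^2 + \|y\|^2 \le \|x-z\|^2 + \|y-w\|^2$ for all $z \in \Lambda$ and $w \in \Gamma$. The inclusion $\vc(\Lambda) \times \vc(\Gamma) \subseteq \vc(\Lambda \times \Gamma)$ then follows by adding the two Voronoi inequalities $\|x\|^2 \le \|x-z\|^2$ and $\|y\|^2 \le \|y-w\|^2$. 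For the reverse inclusion I would specialize the combined inequality to $w = \zero \in \Gamma$, recovering $\|x\|^2 \le \|x-z\|^2$ for all $z \in \Lambda$, i.e.\ $x \in \vc(\Lambda)$, and symmetrically to $z = \zero$ to get $y \in \vc(\Gamma)$.

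For part~(ii) the argument is the analogous one with the scalar product in place of the norm: using $\la (x,y),(z,w)\ra = \la x,z\ra + \la y,w\ra$, membership $(x,y) \in (\Lambda \times \Gamma)^\star$ amounts to $\la x,z\ra + \la y,w\ra \in \Z$ for all $z \in \Lambda$ and $w \in \Gamma$. If $x \in \Lambda^\star$ and $y \in \Gamma^\star$ this is immediate; conversely, taking $w = \zero$ forces $\la x,z\ra \in \Z$ for all $z \in \Lambda$, and taking $z = \zero$ forces $\la y,w\ra \in \Z$ for all $w \in \Gamma$, whence $(x,y) \in \Lambda^\star \times \Gamma^\star$.

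I expect no genuine obstacle here: both statements are essentially bookkeeping once the ambient linear hulls are identified and the additivity of $\|\cdot\|^2$ and $\la\cdot,\cdot\ra$ over the product is invoked. The only point worth an explicit (trivial) remark is that $\zero$ lies in every lattice, which is what makes the ``specialize one coordinate to $\zero$'' step available for both reverse inclusions.
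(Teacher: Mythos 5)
Your proof is correct, and it is exactly the straightforward verification the paper has in mind: the paper states this lemma without proof ("straightforward from the definitions and left as an exercise"), and your double-inclusion argument via additivity of $\|\cdot\|^2$ and $\la\cdot,\cdot\ra$, specializing one coordinate to $\zero \in \Lambda$ (resp.\ $\zero \in \Gamma$) for the reverse inclusions, fills that gap as intended.
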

The proof is straightforward from the definitions and is left as an exercise.
A main ingredient for proving Theorem~\ref{thmLowerBound} is to consider the dual polytope $\vc(\lat)^\circ$ of $\vc(\lat)$.
Recall that we have $\xc(\vc(\lat)) = \xc(\vc(\lat)^\circ)$ by Lemma~\ref{lem:xc-dual}.
The following two observations are crucial for our arguments.

\begin{lem}
	\label{lem:dualVC}
	For every lattice $\Lambda$ we have
	\begin{equation*}
		\vc(\lat)^\circ = \conv \left \{ \tfrac{2}{\|z\|^2} z : z \in \lat \setminus \{\zero\} \right \}.
	\end{equation*}
\end{lem}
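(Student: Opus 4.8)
The plan is to rewrite the halfspace description of $\vc(\lat)$ so that every defining inequality has the form $\la x,y\ra\le 1$, and then simply read off the dual polytope via~\eqref{eq:dualPolytope}. First I would observe that, for $z\in\lat\setminus\{\zero\}$, squaring the inequality $\|x\|\le\|x-z\|$ and simplifying yields $2\la x,z\ra\le\|z\|^2$, i.e.\ $\la x,\tfrac{2}{\|z\|^2}z\ra\le 1$; for $z=\zero$ the inequality is vacuous. Hence
\begin{equation}\label{eq:vcIneqForm}
\vc(\lat)=\Bigl\{x\in\lin(\lat):\bigl\la x,\tfrac{2}{\|z\|^2}z\bigr\ra\le 1\text{ for all }z\in\lat\setminus\{\zero\}\Bigr\}.
\end{equation}
I would also record that $\vc(\lat)$ is a polytope that is full-dimensional in $\lin(\lat)$ and symmetric about the origin (the condition in~\eqref{eq:vcIneqForm} is invariant under $x\mapsto-x$, $z\mapsto-z$), so that $\zero$ lies in its relative interior and $\vc(\lat)^\circ$ is defined with $\lin(\vc(\lat)^\circ)=\lin(\lat)$.

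The inclusion ``$\supseteq$'' is then immediate from~\eqref{eq:vcIneqForm}: for each fixed $z\in\lat\setminus\{\zero\}$ it says precisely that $\la x,\tfrac{2}{\|z\|^2}z\ra\le 1$ for every $x\in\vc(\lat)$, i.e.\ that $\tfrac{2}{\|z\|^2}z\in\vc(\lat)^\circ$; since $\vc(\lat)^\circ$ is convex, it contains the convex hull of all these points. For ``$\subseteq$'' I would use that $\vc(\lat)$, being a polytope, is already cut out by finitely many of the halfspaces in~\eqref{eq:vcIneqForm}; choosing such a finite subsystem yields $z_1,\dots,z_m\in\lat\setminus\{\zero\}$ with $\vc(\lat)=\{x\in\lin(\lat):\la x,w_i\ra\le 1,\ i\in[m]\}$ where $w_i:=\tfrac{2}{\|z_i\|^2}z_i$. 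Now~\eqref{eq:dualPolytope} gives $\vc(\lat)^\circ=\conv\{w_1,\dots,w_m\}$, which is contained in $\conv\{\tfrac{2}{\|z\|^2}z:z\in\lat\setminus\{\zero\}\}$ because each $w_i$ is one of the generators on the right. Combining the two inclusions finishes the proof.

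The only point that deserves justification — and the main (mild) obstacle — is the step where the infinite system~\eqref{eq:vcIneqForm} is reduced to a finite subsystem \emph{with normals still of the form} $\tfrac{2}{\|z\|^2}z$. This is the standard fact that every facet of a polytope presented as an intersection of closed halfspaces is supported by one of those halfspaces; equivalently, that the facets of $\vc(\lat)$ are induced by lattice vectors (the Voronoi-relevant vectors). Concretely, for a facet $F$ of $\vc(\lat)$, a point $x_0\in\relint F$, and the outer facet normal $u$ at $F$, the point $x_0+\eps u$ leaves $\vc(\lat)$ for every $\eps>0$ while violating no other facet inequality for small $\eps$; by~\eqref{eq:vcIneqForm} some $z$ then satisfies $\la x_0+\eps u,\tfrac{2}{\|z\|^2}z\ra>1$, and since only finitely many lattice vectors can play this role (they lie in a bounded region and $\lat$ is discrete), one fixed such $z$ works along a sequence $\eps\to 0$, forcing $x_0$ onto its bounding hyperplane and identifying $\tfrac{2}{\|z\|^2}z$ as the facet normal of $F$. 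Alternatively, one can bypass this via the bipolar theorem: $\vc(\lat)^\circ=\cl\conv\bigl(\{\zero\}\cup\{\tfrac{2}{\|z\|^2}z:z\in\lat\setminus\{\zero\}\}\bigr)$, where $\zero$ is redundant (it is the midpoint of $\tfrac{2}{\|z\|^2}z$ and $-\tfrac{2}{\|z\|^2}z$) and the closure may be dropped since $\vc(\lat)^\circ$ is a polytope whose vertices must already lie among the points $\tfrac{2}{\|z\|^2}z$.
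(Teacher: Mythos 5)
Your proof is correct and follows essentially the same route as the paper: rewrite the inequalities $\|x\|\le\|x-z\|$ in the normalized form $\la x,\tfrac{2}{\|z\|^2}z\ra\le 1$ and then read off the dual via \eqref{eq:dualPolytope}. The only difference is that you explicitly justify passing from the infinite system of lattice inequalities to a finite facet-defining subsystem (equivalently, handle the closure in the bipolar argument), a point the paper's proof leaves implicit.
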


\begin{proof}
	In view of the identities
	\begin{align*}
		\vc(\lat) = & \left \{ x \in \lin(\lat) : \|x\|^2 \le \|x - z\|^2 \text{ for all } z \in \lat \right \} \\
		= & \left \{ x \in \lin(\lat) : \la x,z \ra \le \tfrac{1}{2} \|z\|^2 \text{ for all } z \in \lat \right \} \\
		= & \left \{ x \in \lin(\lat) : \la x,\tfrac{2}{\|z\|^2} z \ra \le 1 \text{ for all } z \in \lat \setminus \{\zero\} \right \},
	\end{align*}
	the claim follows from \eqref{eq:dualPolytope}.
\end{proof}

\begin{lem}
	\label{lem:polarFace}
	Let $\Lambda \subseteq \R^d$ be a lattice and $p \in \R^d$. If $\zero \in \cl(p,\lat)$, then 
	\[
	\conv \left \{ \tfrac{2}{\|z\|^2} z : z \in \cl(p,\lat) \setminus \{\zero\} \right \}
	\]
	is a face of $\vc(\lat)^\circ$.
\end{lem}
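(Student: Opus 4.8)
The plan is to exhibit the given set as the face of $\vc(\lat)^\circ$ carved out by an explicit supporting hyperplane. By Lemma~\ref{lem:dualVC} we have $\vc(\lat)^\circ = \conv\{\tfrac{2}{\|z\|^2}z : z \in \lat \setminus \{\zero\}\}$, so it is enough to produce a linear functional that is $\le 1$ on all of these generators and attains the value $1$ exactly on those indexed by $z \in \cl(p,\lat)\setminus\{\zero\}$. The natural candidate is $y \mapsto \la p, y\ra$.

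First I would note that the hypothesis $\zero \in \cl(p,\lat)$ is equivalent to $p \in \vc(\lat)$, by the equivalence $z \in \cl(x,\lat) \Leftrightarrow x - z \in \vc(\lat)$ recorded in the introduction; in particular $p \in \lin(\lat) = \lin(\vc(\lat)^\circ)$, so $\la p,\cdot\ra$ is a legitimate functional on the linear hull of $\vc(\lat)^\circ$. Unwinding the defining inequalities of $\vc(\lat)$ exactly as in the proof of Lemma~\ref{lem:dualVC}, membership $p \in \vc(\lat)$ means $\la p,z\ra \le \tfrac12\|z\|^2$ for all $z \in \lat$, equivalently $\la p, \tfrac{2}{\|z\|^2}z\ra \le 1$ for every $z \in \lat\setminus\{\zero\}$. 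Hence $\la p, y\ra \le 1$ is a valid inequality for $\vc(\lat)^\circ$.

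The key step is the equality case. For $z \ne \zero$, clearing denominators and completing the square yields $\la p, \tfrac{2}{\|z\|^2}z\ra = 1 \iff \la p,z\ra = \tfrac12\|z\|^2 \iff \|p - z\|^2 = \|p\|^2$. Since $\zero \in \cl(p,\lat)$, the number $\|p\| = \|p-\zero\|$ equals $\dist(p,\lat)$, so $\|p - z\| = \|p\|$ holds precisely when $z \in \cl(p,\lat)$. Thus the hyperplane $H = \{y : \la p,y\ra = 1\}$ supports $\vc(\lat)^\circ$ and contains exactly the generators $\tfrac{2}{\|z\|^2}z$ with $z \in \cl(p,\lat)\setminus\{\zero\}$. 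Intersecting a convex hull with one of its supporting hyperplanes returns the convex hull of the generators lying on that hyperplane (a finite set here, since $\cl(p,\lat)$ is finite), so the face $\vc(\lat)^\circ \cap H$ equals $\conv\{\tfrac{2}{\|z\|^2}z : z \in \cl(p,\lat)\setminus\{\zero\}\}$, which is the claim; the degenerate case $\cl(p,\lat)=\{\zero\}$ just gives the empty face.

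I do not expect a real obstacle here: the whole argument is essentially the same computation that underlies Lemma~\ref{lem:dualVC}, now applied to the single supporting hyperplane $\la p,\cdot\ra = 1$ instead of to all of them at once. The only points needing mild care are checking that $p \in \lin(\lat)$ so the functional is well defined, matching the closest-vector condition with the tightness condition via completing the square, and invoking the standard fact that a supporting hyperplane meets $\conv(V)$ in $\conv(V\cap H)$ — which is clean here because $\vc(\lat)^\circ$ is a polytope and $\cl(p,\lat)$ is finite.
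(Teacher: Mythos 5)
Your proof is correct and follows essentially the same route as the paper's: both use the supporting hyperplane $\la p,y\ra = 1$, verify validity and the equality case via $\|p-z\|^2 \ge \|p\|^2 \iff \la p,\tfrac{2}{\|z\|^2}z\ra \le 1$, and conclude with Lemma~\ref{lem:dualVC}. One tiny remark: your detour through ``$\zero \in \cl(p,\lat) \iff p \in \vc(\lat)$'' implicitly assumes $p \in \lin(\lat)$, whereas the lemma allows arbitrary $p \in \R^d$; this is harmless, since the inequality $\la p,z\ra \le \tfrac12\|z\|^2$ for all $z \in \lat$ follows directly from $\zero \in \cl(p,\lat)$ (as the paper does), and the functional $\la p,\cdot\ra$ is defined on all of $\R^d$ anyway.
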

\begin{proof}
	Since $\zero \in \cl(p,\lat)$, every nonzero lattice point $z \in \lat \setminus \{\zero\}$ satisfies $ \|p - z\|^2 \ge \|p\|^2 $,
	with equality if and only if $z \in \cl(p,\lat) \setminus \{\zero\}$.
	Note that the above inequality is equivalent to $ \la p,\tfrac{2}{\|z\|^2}z \ra \le 1 $.
	Thus, due to Lemma~\ref{lem:dualVC} we see that $ F := \{ y \in \vc(\lat)^\circ : \la p,y \ra = 1 \} $ is a face of $\vc(\lat)^\circ$.
	This establishes the claim since
	\begin{align*}
		F &= \conv \left \{ \tfrac{2}{\|z\|^2} z : z \in \lat \setminus \{\zero\}, \, \la p,\tfrac{2}{\|z\|^2}z \ra = 1 \right \} \\
		&= \conv \left \{ \tfrac{2}{\|z\|^2} z : z \in \cl(p,\lat) \setminus \{\zero\} \right \}. \qedhere
	\end{align*}
\end{proof}

\section{Lattices with small extension complexity}
\label{sec:upperBound}

In this section, we provide bounds on the extension complexities of Voronoi cells of some prominent lattices.

\subsection{Root lattices and their duals}
\label{sec:rootLat}

We start with Voronoi cells of root lattices and their duals.
An \emph{irreducible root lattice} is a lattice $\lat$ for which there exists a finite set~$S$ of vectors of squared length equal to~$1$ or~$2$, such that $\lat = \{ \sum_{b \in S} \alpha_b b : \alpha_b \in \Z \text{ for all } b \in S\}$.
We say that a lattice is a \emph{(general) root lattice}, if it is isomorphic to a lattice obtained by iteratively taking Cartesian products with irreducible root lattices.
A well-known theorem related to the classification of reflection groups states that besides the lattice $\Z^d$ of integers, up to isomorphism the irreducible root lattices split into the two infinite classes
\begin{align*}
	A_d &= \left\{ x \in \Z^{d+1} : x(1) + \ldots + x(d+1) = 0 \right\} \quad \textrm{and} \\
	D_d &= \left\{ x \in \Z^d : x(1) + \ldots + x(d) \textrm{ is even} \right\} ,
\end{align*}
and the three exceptional lattices
\begin{align*}
	E_8 &= D_8 \cup \left( \tfrac12 \one + D_8 \right) , \\
	E_7 &= \left\{ x \in E_8 : \la x , e_7 + e_8 \ra = 0 \right\} \textrm{ and} \\ 
	E_6 &= \left\{ x \in E_7 : \la x , e_6 + e_8 \ra = 0 \right\}.
\end{align*}
Here and in the following, we denote by $e_i$ the $i$th standard Euclidean unit vector and by~$\one$ the all-one vector in the corresponding space.
Moreover, the dual lattices of the two infinite classes $A_d$ and $D_d$ are given by
\[
A_d^\star = \bigcup_{i=0}^d \left( v_i + A_d \right),
\]
with $v_i = \Big( \underbrace{\tfrac{i}{d+1}, \ldots, \tfrac{i}{d+1}}_{j \textrm{ times}}, \underbrace{-\tfrac{j}{d+1}, \ldots, -\tfrac{j}{d+1}}_{i \textrm{ times}}\Big)$, for $0 \leq i \leq d$ and $j=d+1-i$, and
\[
D_d^\star = \Z^d \cup (\tfrac12\one + \Z^d),
\]
respectively.
In the literature the dual $D_d^\star$ is usually scaled by a factor of~$2$ in order to get an integral lattice, which is often more convenient to investigate.
In order to avoid confusion, we denote it by
\[
\bar D_d^\star := 2D_d^\star = (2\Z^d) \cup (\one + 2\Z^d),
\]
and note that this scaling has no effect on the extension complexity of its Voronoi cell.
We refer to Conway \& Sloane~\cite[Ch.~4 \& Ch.~21]{conwaysloane1999splag} and Martinet~\cite[Ch.~4]{martinet2003perfect} for proofs, original references and background information on root lattices.
Details on Voronoi cells and Delaunay polytopes of root lattices can be found in Moody \& Patera~\cite{moodypatera1992voronoi}, which together with the two aforementioned monographs are our main sources of information.

Given a lattice $\lat \subseteq \R^d$ we write $|\lat| = \min\{ \|z\| : z \in \lat \setminus \{\zero\}\}$ for the length of a shortest non-trivial vector in~$\lat$.
A \emph{minimal vector} of~$\lat$ is any vector $z \in \lat$ with $\|z\| = |\lat|$, and a \emph{facet vector} of~$\lat$ is any vector $w \in \lat$, such that the constraint $\la x,w \ra \leq \frac12 \|w\|^2$ defines a facet of the Voronoi cell~$\vc(\lat)$.
For convenience, we write
\begin{align*}
	\cS(\lat) &= \left\{ z \in \lat : \|z\| = |\lat| \right\} \quad \text{and} \\
	\cF(\lat) &= \left\{ w \in \lat : w \text{ is a facet vector of } \lat \right\},
\end{align*}
for the set of minimal vectors and facet vectors, respectively.
In general, one has the inclusion $\cS(\lat) \subseteq \cF(\lat)$, which however is usually strict.
Root lattices are now neatly characterized by the property that every facet vector is at the same time a minimal vector, that is, the equality $\cS(\lat) = \cF(\lat)$ holds (see Rajan \& Shende~\cite{rajanshende1996acharacterization}).

Since the set of minimal vectors of the irreducible root lattices are well-understood, this allows to describe their Voronoi cells as well.
For the sake of the asymptotic study of the extension complexity of their Voronoi cells, it suffices to understand the two infinite families~$A_d$ and $D_d$, and their duals~$A_d^\star$ and $D_d^\star$.
In the sequel, we provide bounds on the extension complexities of the Voronoi cells of these lattices. 
To achieve these bounds, we sometimes use a characterization of the facet vectors and in other cases we use a characterization of the vertices of the Voronoi cell.
For the sake of easy reference, we describe the vertices and facet vectors in all cases.
Due to Lemma~\ref{lem:xc-simple-constructions} and Lemma~\ref{lem:vc-lat-dual-product}, these bounds directly imply Theorem~\ref{thm:xc-root-lattice}. Moreover, the bound in Theorem~\ref{thm:xc-root-lattice} is asymptotically tight since the Voronoi cell of $A_d^\star$ is a permutahedron, see Lemma~\ref{lem:xcAstar}.

\subsubsection{Voronoi cell of \texorpdfstring{$A_d$}{A\_d}}

The Voronoi cell of the root lattice $A_d$ is given by
\[
\vc(A_d) = \conv \left\{ \pi(v_i) : \pi \in S_{d+1}, i \in \{0, \dots, d\}\right\},
\]
where
\[
v_i = \Big( \underbrace{\tfrac{i}{d+1}, \dots, \tfrac{i}{d+1}}_{j \text{ times}}, \underbrace{-\tfrac{j}{d+1}, \dots, -\tfrac{j}{d+1}}_{i \textrm{ times}} \Big) \in \R^{d+1}
\]
with $j = d+1-i$.
Moreover, we have
\begin{align*}
\vc(A_d) &= \left \{ x \in \R^{d+1} : \la x,z \ra \le 1 \text{ for all } z \in \cF(A_d) \right \}, \text{ where }\\
\cF(A_d) &= \left\{ \pi((1,-1,0, \dots, 0)) : \pi \in S_{d+1} \right\}
\end{align*}
(see~\cite[Ch.~21 \& Ch.~4, Sec.~6]{conwaysloane1999splag}).

\goodbreak
\begin{lem} \label{lem:xcA}
	$\xc(\vc(A_d)) = \cO(d)$.
\end{lem}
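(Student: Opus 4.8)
The plan is to use the characterization of the facet vectors, $\cF(A_d) = \{\pi((1,-1,0,\dots,0)) : \pi \in S_{d+1}\}$, which gives the inequality description
\[
\vc(A_d) = \{ x \in \R^{d+1} : \la x, z\ra \le 1 \text{ for all } z \in \cF(A_d)\}.
\]
Since each such $z$ has the form $e_a - e_b$ for a pair of distinct indices $a,b \in [d+1]$, the inequality $\la x, e_a - e_b\ra \le 1$ is $x(a) - x(b) \le 1$. Thus $\vc(A_d)$ lives inside the hyperplane $\lin(A_d) = \{x : \sum_i x(i) = 0\}$ and is cut out there by the $\binom{d+1}{2}\cdot 2$ inequalities $|x(a) - x(b)| \le 1$. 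Equivalently, $\vc(A_d) = \lin(A_d) \cap \{x \in \R^{d+1} : \max_i x(i) - \min_i x(i) \le 1\}$.

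The main step is then to observe that the polytope $C := \{x \in \R^{d+1} : \max_i x(i) - \min_i x(i) \le 1\}$ — equivalently, $\{x : x(a) - x(b) \le 1 \text{ for all } a \ne b\}$ — has extension complexity $\cO(d)$. Indeed, introducing two auxiliary variables $s, t$ one writes $C$ as the projection of $\{(x,s,t) : s \le x(i) \le t \text{ for all } i \in [d+1],\ t - s \le 1\}$, which is a polyhedron described by $2(d+1) + 1 = \cO(d)$ inequalities (note $C$ is not bounded, but one can intersect with a large cube or simply work with the fact that Lemma~\ref{lem:xcFace}-style arguments and the standard extension-complexity machinery apply to polyhedra as well; alternatively restrict to the bounded slice $\lin(A_d)\cap C$ directly by adding the single equation $\sum_i x(i)=0$ to the extended polyhedron, keeping the inequality count $\cO(d)$). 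Intersecting with the hyperplane $\lin(A_d)$ costs nothing extra in the facet count of the extension (one just adds an equality constraint in the lifted space), so $\xc(\vc(A_d)) = \xc(\lin(A_d)\cap C) = \cO(d)$.

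I expect the only real subtlety to be bookkeeping: confirming that $C$ as I have written it really does equal the set cut out by all the inequalities $x(a) - x(b)\le 1$ (this is immediate, since $x(a)-x(b)\le 1$ for all $a,b$ is the same as $\max_i x(i) - \min_i x(i) \le 1$), and making sure the unboundedness of $C$ does not cause trouble — this is handled by passing to the bounded polytope $\vc(A_d) = \lin(A_d)\cap C$ and noting that it is the linear image (projection forgetting $s,t$) of the bounded polyhedron $\{(x,s,t) : \sum_i x(i) = 0,\ s \le x(i) \le t\ \forall i,\ t-s\le 1\}$, which has $\cO(d)$ facets. So the argument is short; the content is recognizing $\vc(A_d)$ as essentially a "diameter $\le 1$ in the $\ell_\infty$ sense" polytope and using the standard min/max trick for a compact formulation.
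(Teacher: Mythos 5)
Your proof is correct, but it takes a different route from the paper. The paper dualizes: by Lemma~\ref{lem:dualVC} and the facet-vector description, $\vc(A_d)^\circ = \conv\{e_a - e_b : a \neq b\} = S + (-S)$ with $S = \conv\{e_1,\dots,e_{d+1}\}$, so Lemma~\ref{lem:xc-simple-constructions}(ii) together with $\xc(P)=\xc(P^\circ)$ gives $\xc(\vc(A_d)) \le 2(d+1)$. You instead stay in the primal and build an explicit lift: writing $\vc(A_d) = \lin(A_d) \cap \{x : x(a)-x(b)\le 1 \ \forall a\ne b\}$ and introducing the min/max variables $s,t$, you exhibit $\vc(A_d)$ as the projection of $\{(x,s,t) : \sum_i x(i)=0,\ s\le x(i)\le t\ \forall i,\ t-s\le 1\}$, which has at most $2(d+1)+1$ facets (and is indeed bounded, since $\sum_i x(i)=0$ forces $s\le 0\le t$, hence $s,t,x \in [-1,1]^{d+3}$ — worth stating explicitly, though it is immediate). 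The two arguments are essentially polar to each other: your lift is the dual picture of the paper's Minkowski-sum decomposition. What your version buys is an explicit, self-contained extended formulation that needs neither Lemma~\ref{lem:xc-dual} nor Lemma~\ref{lem:xc-simple-constructions}; what the paper's version buys is brevity given its toolbox and uniformity with the treatment of the other root lattices, which all proceed via the dual polytope. The constants are comparable ($2d+3$ versus $2(d+1)$), and both yield the claimed $\cO(d)$.
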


\begin{proof}
Using the description of the facet vectors stated above, we obtain that $\vc(A_d)^\circ = S + (-S)$, where $S$ is the $d$-dimensional simplex $S = \conv\{e_1, \dots, e_{d+1}\}$.
Hence, using Lemma~\ref{lem:xc-simple-constructions} and Lemma~\ref{lem:dualVC}, we obtain the upper bound $\xc(\vc(A_d)) \leq 2(d+1)$.
\end{proof}

\subsubsection{Voronoi cell of \texorpdfstring{$D_d$}{D\_d}}

The Voronoi cell of $D_d$ is given by
\begin{align*}
\vc(D_d) = \conv \left(\left\{ \pm e_1,\dots, \pm e_d \right\} \cup \{-\tfrac12,\tfrac12\}^d \right).
\end{align*}
Moreover, we have
\begin{align*}
\vc(D_d) &= \left \{ x \in \R^{d+1} : \la x,z \ra \le 1 \text{ for all } z \in \cF(D_d) \right \}, \text{ where }\\
\cF(D_d) &= \left\{ \pm e_i \pm e_j : 1 \leq i < j \leq d \right\}.
%\cF(D_d) &= \left\{ \alpha e_i + \beta e_j : i, j \in [d], i \ne j, \text{ and } \alpha,\beta \in \{-1,1\} \right\}.
\end{align*}
This follows from the characterization of the minimal (and thus facet) vectors of $D_d$ given in~\cite[Ch.~4, Sec.~7]{conwaysloane1999splag}.
The inner description of the Voronoi cell can be read off from the vertices of a fundamental simplex for~$D_d$ (see~\cite[Ch.~21, Fig.~21.7]{conwaysloane1999splag}). 

\begin{lem} \label{lem:xcD}
	$\xc(\vc(D_d)) = \cO(d)$.
\end{lem}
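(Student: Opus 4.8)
The plan is to mimic the proof of Lemma~\ref{lem:xcA} by giving an explicit outer description of $\vc(D_d)^\circ$ as a small Minkowski sum (or convex hull of few simple polytopes), and then invoke Lemma~\ref{lem:xc-dual} together with Lemma~\ref{lem:xc-simple-constructions}. By Lemma~\ref{lem:dualVC}, $\vc(D_d)^\circ = \conv\{\tfrac{2}{\|z\|^2}z : z \in D_d \setminus \{\zero\}\}$, and since $\cF(D_d) = \cS(D_d) = \{\pm e_i \pm e_j : 1 \le i < j \le d\}$ and the facet normals are exactly $w/\tfrac12\|w\|^2$ for facet vectors $w$, we get from \eqref{eq:dualPolytope} that $\vc(D_d)^\circ = \conv\{\pm e_i \pm e_j : 1 \le i < j \le d\}$. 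The goal is therefore to show $\xc\!\left(\conv\{\pm e_i \pm e_j : 1 \le i < j \le d\}\right) = \cO(d)$.

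First I would observe that this polytope has a clean description: it is $\{x \in \R^d : \|x\|_1 \le 2,\ \|x\|_\infty \le 1\}$. Indeed, each vertex $\pm e_i \pm e_j$ satisfies both constraints with equality in the relevant sense, and conversely any $x$ with $\|x\|_1 \le 2$, $\|x\|_\infty \le 1$ lies in the convex hull of such points — this is a routine verification (e.g.\ by noting the cross-polytope $\{\|x\|_1 \le 2\}$ has vertices $\pm 2e_i$, and capping with $\|x\|_\infty \le 1$ replaces each $\pm 2 e_i$ by the face $\conv\{\pm e_i \pm e_j : j \ne i\}$). So $\vc(D_d)^\circ = 2B_1^d \cap B_\infty^d$, where $B_1^d$ is the $\ell_1$-ball and $B_\infty^d = [-1,1]^d$ the cube.

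Now I would apply Lemma~\ref{lem:xc-simple-constructions}(iii): $\xc(2B_1^d \cap B_\infty^d) \le \xc(2B_1^d) + \xc(B_\infty^d)$. The cube $B_\infty^d$ has $2d$ facets, so $\xc(B_\infty^d) \le 2d$. The cross-polytope $2B_1^d = \conv\{\pm 2e_1, \dots, \pm 2e_d\}$ is $\conv(S' \cup (-S'))$ for the simplex $S' = \conv\{2e_1, \dots, 2e_d\}$ — or more directly $2B_1^d = S' + (-S')$ after translating, exactly as in the proof of Lemma~\ref{lem:xcA} — and the simplex has $d+1$ facets, giving $\xc(2B_1^d) \le 2(d+1)$ via Lemma~\ref{lem:xc-simple-constructions}(ii). (Alternatively one could cite that any $\ell_1$-ball has $\cO(d)$ extension complexity.) Combining, $\xc(\vc(D_d)^\circ) \le 2(d+1) + 2d = \cO(d)$, and then Lemma~\ref{lem:xc-dual} yields $\xc(\vc(D_d)) = \xc(\vc(D_d)^\circ) = \cO(d)$.

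The only real obstacle is verifying the identity $\conv\{\pm e_i \pm e_j : i < j\} = \{x : \|x\|_1 \le 2,\ \|x\|_\infty \le 1\}$; everything else is a direct application of the toolbox lemmas. This verification is elementary: the $\supseteq$ inclusion in the $\mathcal{H}$-description is immediate since every listed vertex satisfies $\|x\|_1 = 2$ and $\|x\|_\infty = 1$, and the reverse follows because the constraints $\|x\|_1 \le 2$ and $-1 \le x_i \le 1$ are precisely the facet-defining inequalities read off from $\cF(D_d)$ (the facet vectors $\pm e_i \pm e_j$ give inequalities $\pm x_i \pm x_j \le 1$, whose conjunction is equivalent to $\|x\|_1 \le 2$ together with $\|x\|_\infty \le 1$ in dimension $d \ge 2$). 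I would write this out in one short paragraph and then let the extension-complexity bookkeeping finish the proof.
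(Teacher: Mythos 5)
Your overall route coincides with the paper's: pass to the dual via Lemma~\ref{lem:xc-dual}, identify $\vc(D_d)^\circ$ as the intersection of the cube $[-1,1]^d$ with the crosspolytope $2\conv\{\pm e_1,\dots,\pm e_d\}$, and finish with Lemma~\ref{lem:xc-simple-constructions} and $\cO(d)$ bounds for the cube and the crosspolytope; the decomposition and the final bookkeeping are correct. The problem is the justification you offer for the key identity $\conv\{\pm e_i\pm e_j : i<j\}=\{x:\|x\|_1\le 2,\ \|x\|_\infty\le 1\}$. The inequalities $\la x,\pm e_i\pm e_j\ra\le 1$ read off from $\cF(D_d)$ are the facet inequalities of $\vc(D_d)$ \emph{itself} (each facet vector $w$ gives $\la x,w\ra\le\tfrac12\|w\|^2=1$), not of its dual, and their conjunction is not equivalent to $\|x\|_1\le 2$ together with $\|x\|_\infty\le 1$: the point $x=(1,1,0,\dots,0)$ satisfies both of the latter constraints but violates $x_1+x_2\le 1$. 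So the sentence ``the reverse follows because the constraints are precisely the facet-defining inequalities read off from $\cF(D_d)$'' conflates the primal and the dual polytope and does not establish the inclusion you need; your earlier ``cap the crosspolytope'' sketch is closer to a proof but is left unargued.

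The clean fix is exactly what the paper does: use the \emph{vertex} description $\vc(D_d)=\conv\left(\{\pm e_1,\dots,\pm e_d\}\cup\{-\tfrac12,\tfrac12\}^d\right)$ together with \eqref{eq:dualPolytope}. The vertices $\pm e_i$ yield the constraints $\pm x_i\le 1$ and the vertices $v\in\{-\tfrac12,\tfrac12\}^d$ yield $\la v,x\ra\le 1$, i.e.\ $\|x\|_1\le 2$, so $\vc(D_d)^\circ=2\conv\{\pm e_1,\dots,\pm e_d\}\cap[-1,1]^d$ with no further verification. One more minor slip: $2\conv\{\pm e_1,\dots,\pm e_d\}$ is not a translate of $S'+(-S')$ for $S'=\conv\{2e_1,\dots,2e_d\}$ (that Minkowski sum is the $(d-1)$-dimensional difference body $\conv\{2(e_i-e_j): i,j\in[d]\}$), so the analogy with the proof of Lemma~\ref{lem:xcA} fails as stated; however, your alternatives --- writing the crosspolytope as $\conv(S'\cup(-S'))$ and applying Lemma~\ref{lem:xc-union}, or simply citing \eqref{eq:xcCube} as the paper does --- give $\xc(2\conv\{\pm e_1,\dots,\pm e_d\})\le 2d+2$, so the final $\cO(d)$ estimate stands once the identity is repaired.
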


\begin{proof}
	Using the description of the vertices of $\vc(D_d)$ stated above, we obtain that 
	\[
	\vc(D_d)^\circ = 2 \cdot \conv\{ \pm e_1, \dots, \pm e_d\} \cap [-1,1]^d.
	\]
	Hence, the dual of the Voronoi cell is the intersection of a hypercube and a crosspolytope.
	Since
	\begin{equation}
		\label{eq:xcCube}
		\xc([-1,1]^d) = \xc(\conv\{\pm e_1, \dots, \pm e_d\}) = 2d,
	\end{equation}
	(see, e.g.,~\cite[Cor.~2.5]{grandepadrolsanyal2018extension}), Lemmas~\ref{lem:xc-simple-constructions} and~\ref{lem:dualVC} imply $\xc(\vc(D_d)) = \cO(d)$.
\end{proof}

\subsubsection{Voronoi cell of \texorpdfstring{$A_d^\star$}{A\_d*}}
\label{sec:AnStar}

The Voronoi cell of the dual of the root lattice $A_d^\star$ is given by
\[
\vc(A_d^\star) = \conv \left\{ \pi(v) : \pi \in S_{d+1} \right\},
\]
where
\[
v = \tfrac{1}{2d+2}\left(-d,-d+2,\ldots,d-2,d\right) \in \R^{d+1}.
\]
Moreover, we have
\begin{align*}
\cF(A_d^\star) = \left\{ v \in \lin(A_d^\star) : v \text{ is a vertex of } \vc(A_d) \right\}.
\end{align*}
The characterization of the vertices can be found in~\cite[Ch.~21, Sec.~3F]{conwaysloane1999splag} and the fact that the facet vectors are exactly the vertices of $\vc(A_d)$ is explained in detail in the unpublished monograph~\cite[Ch.~3.5]{engelmichelsenechal2004lattice}.

\begin{lem} \label{lem:xcAstar}
	$\xc(\vc(A_d^\star)) = \Theta(d \log{d})$.
\end{lem}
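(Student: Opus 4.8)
The plan is to identify $\vc(A_d^\star)$ as a permutahedron and invoke the result of Goemans cited in the excerpt. First I would observe that
\[
\vc(A_d^\star) = \conv\{\pi(v) : \pi \in S_{d+1}\}
\]
with $v = \tfrac{1}{2d+2}(-d,-d+2,\ldots,d-2,d)$, so this is (up to a global scaling by $\tfrac{1}{2d+2}$ and an affine shift) exactly the permutahedron on the vector $(1,2,\ldots,d+1)$: indeed, the entries of $2d+2 \cdot v$ are $-d,-d+2,\ldots,d$, which is an arithmetic progression with $d+1$ terms, and applying the affine map $t \mapsto \tfrac12(t+d+2)$ coordinatewise turns it into $(1,2,\ldots,d+1)$. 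Since extension complexity is invariant under invertible affine transformations, $\xc(\vc(A_d^\star)) = \xc(\conv\{\pi((1,2,\ldots,d+1)) : \pi \in S_{d+1}\})$.

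Next I would apply Goemans' theorem, stated right after Lemma~\ref{lem:xc-simple-constructions-perms} in the excerpt, which gives $\xc(\conv\{\pi((1,2,\ldots,n)) : \pi \in S_n\}) = \Theta(n\log n)$. With $n = d+1$ this yields $\xc(\vc(A_d^\star)) = \Theta((d+1)\log(d+1)) = \Theta(d\log d)$, which is the claim. Strictly speaking Goemans' statement gives both an upper and a lower bound; the upper bound is what one normally needs, but here the lower bound is also part of the assertion, so I would make sure to cite the two-sided version.

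There is essentially no obstacle here — the only thing to be careful about is the affine-equivalence step: one must check that the permutahedron on an arithmetic progression $a, a+\delta, \ldots, a+d\delta$ with $\delta \neq 0$ is affinely isomorphic to the standard permutahedron on $(0,1,\ldots,d)$, which follows because the coordinatewise affine map $x \mapsto \tfrac{1}{\delta}(x - a\one)$ is invertible on $\lin(A_d^\star)$ (it is the restriction of an invertible affine map of $\R^{d+1}$) and commutes with coordinate permutations, hence sends $\conv\{\pi(v)\}$ bijectively onto the standard permutahedron. For completeness one might also note the degenerate-looking edge case is not an issue: $d \geq 1$ ensures $d+1$ distinct coordinates so the progression is genuinely non-constant. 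I expect the author's proof to be one or two sentences along exactly these lines, simply pointing out that $\vc(A_d^\star)$ is a permutahedron and quoting~\cite{goemans2015smallest}.
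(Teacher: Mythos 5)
Your proof is correct and follows essentially the same route as the paper: the paper also identifies $\vc(A_d^\star)$ as an affine image of the standard permutahedron (explicitly, $\vc(A_d^\star) = \tfrac{1}{d+1}P_d - \tfrac{d+2}{2d+2}\one$ with $P_d = \{(\pi(1),\ldots,\pi(d+1)) : \pi \in S_{d+1}\}$) and then quotes Goemans' two-sided bound $\xc(P_d) = \Theta(d\log d)$.
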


\begin{proof}
	Using the description of the vertices of $\vc(A_d^\star)$ stated before, we obtain that $\vc(A_d^\star)$ is an affine linear transformation of the standard permutahedron
	\[
		P_d = \left\{ (\pi(1),\ldots,\pi(d+1)) : \pi \in S_{d+1} \right\}.
	\]
	In fact, 
	\[
	\vc(A_d^\star) = \tfrac{1}{d+1} P_d - \tfrac{d+2}{2d+2} \one.
	\]
	The claim follows, since Goemans~\cite{goemans2015smallest} showed that the extension complexity of~$P_d$ is in $\Theta(d \log{d})$.
\end{proof}

\subsubsection{Voronoi cell of \texorpdfstring{$D_d^\star$}{D\_d*}}

As explained before, we consider the integral lattice $\bar D_d^\star$ instead of~$D_d^\star$.
The Voronoi cell of $\bar D_d^\star$ is given by
\[
\vc(\bar D_d^\star) = \conv \left\{ \pi(v) : \pi \in S_d, v \in V \right\}, 
\]
where
\[
V = \begin{cases}
\{0\}^{\tfrac{d}{2}} \times \{-1,1\}^{\tfrac{d}{2}} & \text{, if } d \text{ is even,} \\
\{0\}^{\tfrac{d-1}{2}} \times \{-\tfrac12,\tfrac12\} \times \{-1,1\}^{\tfrac{d-1}{2}} & \text{, if } d \text{ is odd.}
\end{cases}
\]
Moreover, we have
\begin{align*}
\cF(\bar D_d^\star) = \left\{\pm 2 e_1, \dots, \pm 2 e_d\right\} \cup \{-1,1\}^d.
\end{align*}
We refer to~\cite[Ch.~21, Sect.~3E]{conwaysloane1999splag} for the characterization of the facet vectors and the inner description of the Voronoi cell, which is therein denoted by the symbols $\beta(d,d/2)$, for $d$ even, and $\tfrac12 \delta(d,(d-1)/2)$, for $d$ odd.

\begin{lem}
	\label{lem:xcDstar}
	$\xc(\vc(D_d^\star)) = \cO(d)$.
\end{lem}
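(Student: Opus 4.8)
The plan is to use the explicit description of the facet vectors $\cF(\bar D_d^\star) = \{\pm 2e_1,\dots,\pm 2e_d\} \cup \{-1,1\}^d$ together with Lemma~\ref{lem:dualVC}, which tells us that $\vc(\bar D_d^\star)^\circ = \conv\{\tfrac{2}{\|w\|^2}w : w \in \cF(\bar D_d^\star)\}$. Concretely, the vectors $\pm 2e_i$ have squared length~$4$, contributing the points $\pm e_i$; and the sign vectors $s \in \{-1,1\}^d$ have squared length~$d$, contributing the points $\tfrac{2}{d}s$. Hence
\[
\vc(\bar D_d^\star)^\circ = \conv\Bigl(\{\pm e_1,\dots,\pm e_d\} \cup \tfrac{2}{d}\{-1,1\}^d\Bigr).
\]
So the dual of the Voronoi cell is the convex hull of the union of a crosspolytope and a scaled cube (centered at the origin). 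By Lemma~\ref{lem:xc-union}, $\xc(\vc(\bar D_d^\star)^\circ) \le \xc(\conv\{\pm e_1,\dots,\pm e_d\}) + \xc(\tfrac{2}{d}[-1,1]^d)$, and both of these are $2d$ by~\eqref{eq:xcCube}. Finally Lemma~\ref{lem:xc-dual} gives $\xc(\vc(\bar D_d^\star)) = \xc(\vc(\bar D_d^\star)^\circ) = \cO(d)$, and since scaling the lattice by a constant does not change the extension complexity of its Voronoi cell, the same bound holds for $D_d^\star$.

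First I would record the computation of $\vc(\bar D_d^\star)^\circ$ from Lemma~\ref{lem:dualVC} as above; this is the only real content and is a routine substitution into the formula $\tfrac{2}{\|z\|^2}z$. Then I would invoke Lemma~\ref{lem:xc-union} with $k=2$ — noting that neither the crosspolytope nor the cube is a single point, so the correction term is zero — and plug in the known value $2d$ for each. (Alternatively one could use Lemma~\ref{lem:xc-simple-constructions-perms}-style arguments, but the union bound is cleanest here; one could also phrase it via $\conv(P_1 \cup P_2) = \conv(P_1 + 0) \cup \ldots$ but that is unnecessary.) A small point worth mentioning explicitly is that $\lin(\bar D_d^\star) = \R^d$, so the dual polytope $\vc(\bar D_d^\star)^\circ$ is genuinely $d$-dimensional and $\zero$ lies in its interior, which is what is needed to apply Lemma~\ref{lem:xc-dual}.

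I do not expect any serious obstacle: unlike the $A_d^\star$ case, where the permutahedron structure forces a $\Theta(d\log d)$ lower bound, here everything reduces to the two elementary polytopes whose extension complexity is already known, combined via the toolbox lemmas of Section~\ref{sec:prel}. The only thing to be slightly careful about is whether to phrase the argument through the vertices of $\vc(\bar D_d^\star)$ (which would give a Minkowski-sum description of the primal, as in the $D_d$ proof) or through the facet vectors (giving the union description of the dual, as above); the facet-vector route is the more direct one and matches the data already tabulated, so that is the one I would write up.
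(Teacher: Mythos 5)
Your proof is correct. It is essentially the paper's argument viewed through polarity: the paper reads the same facet-vector data as an outer description of the Voronoi cell itself, namely $\vc(\bar D_d^\star) = [-1,1]^d \cap \tfrac{d}{2}\conv\{\pm e_1,\dots,\pm e_d\}$, and then applies the intersection bound of Lemma~\ref{lem:xc-simple-constructions}(iii) together with \eqref{eq:xcCube}, with no detour through the dual polytope. You instead pass to $\vc(\bar D_d^\star)^\circ = \conv\bigl(\{\pm e_1,\dots,\pm e_d\}\cup\tfrac{2}{d}\{-1,1\}^d\bigr)$ and use Lemma~\ref{lem:xc-union} plus Lemma~\ref{lem:xc-dual}; since intersection and convex hull of unions are polar to one another, the two computations carry identical content and both give the bound $4d$. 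Your route costs one extra (harmless) invocation of duality, and one should formally justify restricting the hull in Lemma~\ref{lem:dualVC} to the facet vectors only, which follows from \eqref{eq:dualPolytope} applied to the irredundant facet description (exactly as the paper does for the congruence lattices in Section~\ref{sec:congruence}); the paper's route is marginally shorter, but your write-up, including the remarks on the zero correction term in Lemma~\ref{lem:xc-union}, on $\zero$ being interior, and on scaling invariance, is sound.
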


\begin{proof}
	Using the above description of the facet vectors, we obtain that 
	\[
	\vc(\bar D_d^\star) = [-1,1]^d \cap \tfrac{d}{2} \cdot \conv\{\pm e_1, \dots, \pm e_d\}.
	\]
	Hence, the Voronoi cell of $D_d^\star$ is the intersection of a hypercube and a crosspolytope.
	As in the case of the root lattice $D_d$, the stated bound follows by Lemma~\ref{lem:xc-simple-constructions} and Equation~\eqref{eq:xcCube}.
\end{proof}

Note that all the bounds stated in Lemmas~\ref{lem:xcA},~\ref{lem:xcD}, and~\ref{lem:xcDstar} are asymptotically tight, since the extension complexity of a polytope grows at least linearly with its dimension (cf.~\cite[Eq.~2 \& Prop.~5.2]{fiorinikaibelpashkovichtheis2013combinatorial}).

\subsection{Zonotopal lattices}
\label{sec:zono}

A \emph{zonotope} $Z \subseteq \R^d$ is the Minkowski sum of finitely many line segments, that is, there are vectors $a_1,b_1,\ldots,a_m,b_m \in \R^d$ such that $Z = \sum_{i=1}^m \conv\{a_i,b_i\}$.
The non-zero vectors $z_i = b_i - a_i$ are usually called the \emph{generators} of the zonotope, and clearly, $Z$ is an affine projection of the $m$-dimensional cube $[-1,1]^m$ via $e_i \mapsto z_i$, for $1 \leq i \leq m$, and a suitable translation.
Regarding the extension complexity of a zonotope~$Z$, the bound $\xc(Z) \leq 2m$ thus immediately follows from the definition.

A lattice $\Lambda \subseteq \R^d$ is said to be \emph{zonotopal} if its Voronoi cell is a zonotope.
Every lattice of dimension at most three is zonotopal, but from dimension four on there exist non-zonotopal lattices.
For instance, the Voronoi cell of the root lattice $D_4$ is the non-zonotopal $24$-cell.
Examples of classes of zonotopal lattices are $\Z^d$, the root lattice~$A_d$, its dual lattice~$A_d^\star$, lattices of Voronoi's first kind, and the tensor product $A_d \otimes A_{d'}$.
Zonotopal space tiles have been extensively studied over the years, mostly due to their combinatorial connections to regular matroids, hyperplane arrangements, and totally unimodular matrices.
For a detailed account on zonotopal lattices and pointers to the original works containing the previous statements we refer to~\cite[Sect.~2]{mccormickpeisscheidweilervallentin2020apolynomial}.

The tiling constraint on a zonotope that arises as the Voronoi cell of a lattice, allows it to have at most quadratically many generators in terms of its dimension.
In particular, these polytopes admit lifts with quadratically many facets.

\begin{thm}
	\label{thm:zonotopal-lattices}
	Each zonotopal lattice $\lat \subseteq \R^d$ satisfies
	$\xc(\vc(\lat)) \leq d(d+1)$.
\end{thm}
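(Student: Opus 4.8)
The plan is to show that a zonotope which tiles space by lattice translates can have at most $\binom{d+1}{2}$ generators (as unoriented directions), so that the bound $\xc(\vc(\lat)) \le 2\binom{d+1}{2} = d(d+1)$ follows from the general bound $\xc(Z) \le 2m$ for a zonotope with $m$ generators. So the whole task reduces to bounding the number of generators.

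First I would recall the classical structure theory: a zonotope $Z = \sum_{i=1}^m \conv\{-z_i, z_i\}$ (after centering) is a space tile for a lattice if and only if the generators $z_i$, suitably scaled, can be taken to be the columns of a \emph{totally unimodular} matrix (equivalently, $Z$ arises from a regular/unimodular matroid, and the lattice is generated by the $z_i$). This is the characterization of zonotopal lattices via regular matroids that the paper alludes to in Section~\ref{sec:zono}. Granting this, the generators are (up to sign and positive scaling) a set of vectors $z_1, \dots, z_m \in \R^d$ forming the columns of a totally unimodular matrix, no two of which are parallel (parallel generators can be merged into one longer segment without changing the zonotope's direction set).

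The key step is then a counting lemma for distinct columns of a totally unimodular matrix: a totally unimodular matrix with $d$ rows has at most $\binom{d+1}{2} + d = \binom{d+1}{2}$ ... more carefully, the number of pairwise non-parallel columns of a $d\times m$ totally unimodular matrix is at most $\binom{d+1}{2}$. I would prove this by the standard argument: by total unimodularity every column has entries in $\{-1,0,+1\}$, and moreover the ``support pattern with signs'' of the columns behaves like the set of edge-vectors of a graph — indeed the classical fact (going back to the structure of regular matroids, or provable directly) is that each column, up to sign, is determined by a pair $\{i,j\}$ of rows (or a single row $i$), giving at most $\binom{d}{2} + d = \binom{d+1}{2}$ possibilities. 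The cleanest route is: after sign changes on rows and columns (which preserve total unimodularity and do not change the number of non-parallel generators up to sign), one may assume a spanning structure is an identity submatrix; then network-matrix / graphic-matroid arguments bound the remaining distinct columns. Combining, $m \le \binom{d+1}{2}$, hence $\xc(\vc(\lat)) \le 2m \le d(d+1)$.

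The main obstacle I anticipate is pinning down precisely the characterization ``zonotopal lattice $\iff$ generators form a totally unimodular system'' and the column-counting bound in a self-contained way, since both are folklore results scattered across the regular-matroid and space-tiling literature (Shephard, McMullen, and the references in~\cite{mccormickpeisscheidweilervallentin2020apolynomial}). If invoking total unimodularity wholesale is undesirable, an alternative is a more geometric argument directly from the facet-to-facet tiling property: each generator direction corresponds to a class of parallel facets of $Z$, and the tiling forces strong combinatorial restrictions (each pair of generators spans a $2$-face that is a hexagon or parallelogram, à la the classification of zonotopal tilings), from which the $\binom{d+1}{2}$ bound can be extracted by a dimension/incidence count. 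I would lead with the totally-unimodular-matrix route as the main line, as it is shortest, and relegate the geometric derivation to a remark.
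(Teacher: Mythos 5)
Your overall route is the same as the paper's: reduce everything to showing that a space-tiling zonotope in $\R^d$ has at most $\binom{d+1}{2}$ pairwise non-parallel generators, using the fact that (after a linear change of coordinates) the generators of such a zonotope can be taken to be distinct, pairwise non-parallel, nonzero columns of a totally unimodular $d \times m$ matrix, and then concluding $\xc(\vc(\lat)) \le 2m \le d(d+1)$. The paper phrases the first step via Erdahl's correspondence between tiling zonotopes and dicings~\cite{erdahl1999zonotopes} (note it is an \emph{affine} equivalence, not merely a rescaling of the generators, though this is harmless for counting directions), and then simply cites the column bound as a classical theorem of Heller~\cite{heller1957onlinear}, already in Korkine--Zolotarev~\cite{korkinezolotarev1877sureles}.

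The genuine gap is in your attempted in-line proof of that column bound. The claim that each column of a totally unimodular matrix is, up to sign, ``determined by a pair of rows or a single row'' is false: it describes network (graphic) matrices only. Totally unimodular matrices can have columns of arbitrarily large support --- interval matrices with consecutive ones are a standard example --- and regular matroids are not all graphic (e.g.\ $R_{10}$, cographic matroids), so your fallback of ``reduce to an identity submatrix and then apply graphic-matroid arguments'' does not go through either. The bound $m \le \binom{d+1}{2}$ for pairwise non-parallel nonzero columns is exactly Heller's theorem, whose proof is substantially harder than the graphic case; you should invoke it as a black box (as the paper does) rather than suggest it follows from a two-nonzeros-per-column pattern. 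With Heller's theorem cited properly, your argument coincides with the paper's and yields $\xc(\vc(\lat)) \le 2\binom{d+1}{2} = d(d+1)$ via Lemma~\ref{lem:xc-simple-constructions} or Lemma~\ref{lem:xc-union}.
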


\begin{proof}
	It suffices to argue that the Voronoi cell is generated by at most~$\binom{d+1}{2}$ line segments.
	Indeed, each line segment $L$ satisfies $\xc(L) = 2$ and hence the statement follows using Lemma~\ref{lem:xc-union}.

	Erdahl~\cite[Sect.~5]{erdahl1999zonotopes} proved that the generators of a space tiling zonotope correspond to the normal vectors of a certain dicing.
	A \emph{dicing} in~$\R^d$ is an arrangement of hyperplanes consisting of~$r \geq d$ families of infinitely many equallyspaced hyperplanes such that: (1) there are~$d$ families whose corresponding normal vectors are linearly independent, and (2) every vertex of the arrangement is contained in a hyperplane of each family.
	
	By~\cite[Thm.~3.3]{erdahl1999zonotopes}, every dicing is affinely equivalent to a dicing whose set of hyperplane normal vectors -- one normal vector for each of the~$r$ families -- consists of the columns of a totally unimodular $d \times r$ matrix.
	By construction, this totally unimodular matrix is such that for any two of its columns $v,w$, we have $v \neq \pm w$ and $v,w \neq \zero$.
	A classical result, that is often attributed to Heller~\cite{heller1957onlinear} but already appears in Korkine \& Zolotarev~\cite{korkinezolotarev1877sureles}, yields that every such totally unimodular $d \times r$ matrix has at most $r \leq \binom{d+1}{2}$ columns.
	Thus, the zonotopal Voronoi cell $\vc(\lat)$ is generated by at most~$\binom{d+1}{2}$ line segments.
\end{proof}

	Alternatively, the fact that zonotopal Voronoi cells in~$\R^d$ are generated by at most~$\binom{d+1}{2}$ line segments also follows from Voronoi's reduction theory.
	The Delaunay subdivisions of zonotopal lattices correspond to certain polyhedral cones (Voronoi's L-types) in the cone~$\psd$ of positive semi-definite $d \times d$ matrices that are generated by rank one matrices.
	Since~$\psd$ has dimension $\binom{d+1}{2}$, Carath\'{e}odory's Theorem yields the bound.
	We refer the reader to Erdahl~\cite[Sect.~7]{erdahl1999zonotopes} for an intuitive description and references to the original works.

\subsection{Lattices defined by simple congruences}
\label{sec:congruence}

For any $a \in \N$, we consider the lattice
\begin{align}
	\Lambda_d(a) := \left\{ x \in \Z^d : x_1 \equiv x_2 \equiv \ldots \equiv x_d \!\!\mod a \right\}.\label{eq:def-congruence-lattice}
\end{align}
The case $a = \lceil \tfrac{d}{2} \rceil$ played a special role in~\cite[Thm.~2]{hunkenschroederreulandschymura2020compact} for the determination of lattices that do not have a basis that admits a compact representation of the Voronoi cell.
To this end, the authors determined the set $\cF(\Lambda_d(\lceil \tfrac{d}{2} \rceil))$ of facet vectors explicitly (there are exponentially many of them).
However, their proof can be extended to general $a$ to give a description of the facet vectors of $\cF(\Lambda_d(a))$ that is precise enough to allow drawing conclusions towards small extended formulations.

\begin{lem}
	\label{lem:facet-vectors-congruence-lattices}
	For all $a \in \N$, the set of facet vectors of $\Lambda_d(a)$ is contained in 
	\begin{align*}
		\cF(\Lambda_d(a)) \subseteq & \ \{\one, -\one \} \cup \{\pm a e_i : i \in [d]\}  \\
		& \ \cup \left\{ v_{S,\ell} \in \R^d : \emptyset \neq S \subsetneq [d], \ell \in \left\{ \left\lfloor \tfrac{a|S|}{d} \right\rfloor , \left\lceil \tfrac{a|S|}{d} \right\rceil \right\} \right\},
	\end{align*}
	where $v_{S,\ell}(i) = a - \ell$, if $i \in S$, and $v_{S,\ell}(i) = -\ell $, if $i \notin S$.
\end{lem}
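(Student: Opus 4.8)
The plan is to follow the strategy of Hunkemöller, Reuland \& Schymura~\cite{hunkenschroederreulandschymura2020compact} and to exploit the classical characterization of facet vectors: a nonzero $w \in \Lambda$ is a facet vector of a lattice $\Lambda$ if and only if $\{w,-w\}$ is exactly the set of shortest (Euclidean) vectors of the coset $w + 2\Lambda$ (see, e.g.,~\cite{conwaysloane1999splag} or~\cite{micciancio2013deterministic}). Throughout, I use that $\one$ and $a e_1,\dots,a e_d$ all lie in $\Lambda_d(a)$, hence $2\one$ and $2a e_1,\dots,2a e_d$ lie in $2\Lambda_d(a)$; so adding $2\one$ to a vector of $\Lambda_d(a)$, or an integer multiple of $2a e_i$ to its $i$-th coordinate, yields a vector in the same coset modulo $2\Lambda_d(a)$. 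I fix a facet vector $w$ of $\Lambda_d(a)$ and show that $w$ is one of the listed vectors; note that only the inclusion is claimed, so the listed vectors need not all be facet vectors.

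First I reduce $w$ coordinatewise. Since $w$ is a shortest vector of its coset and each coordinate can be altered independently by multiples of $2a$, every $w_i$ has minimal absolute value in its residue class modulo $2a$, so $|w_i|\le a$. If $|w_{i_0}|=a$ for some $i_0$, then adding a suitable multiple of $2a e_{i_0}$ flips the sign of that coordinate and produces another shortest vector, which by the characterization must equal $-w$; hence all other coordinates vanish and $w=\pm a e_{i_0}$. Otherwise $|w_i|\le a-1$ for all $i$. In that case, since $w\in\Lambda_d(a)$ there is a common $t\in\{0,\dots,a-1\}$ with $w_i\equiv t\pmod a$ for all $i$; as $w\ne\zero$ we get $t\ge 1$, and the only integers in $\{-(a-1),\dots,a-1\}$ congruent to $t$ modulo $a$ are $t$ and $t-a$. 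Setting $S=\{i:w_i=t\}$ and $\ell=a-t\in\{1,\dots,a-1\}$, we obtain $w=v_{S,\ell}$. If $S=\emptyset$ then $w=-\ell\one$, and $\|w+2\one\|^2=d(\ell-2)^2<d\ell^2=\|w\|^2$ whenever $\ell\ge 2$, contradicting minimality; hence $\ell=1$ and $w=-\one$. Symmetrically, $S=[d]$ forces $w=\one$. So I may assume $\emptyset\ne S\subsetneq[d]$, which in particular implies $w\notin\{\one,-\one\}$.

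It remains to show $\ell\in\{\lfloor a|S|/d\rfloor,\lceil a|S|/d\rceil\}$; since $\ell$ is an integer, it suffices to prove $\tfrac{a|S|}{d}-1<\ell<\tfrac{a|S|}{d}+1$. Consider the vector obtained from $w-2\one$ by reducing each coordinate to minimal absolute value modulo $2a$; it lies in the coset of $w$, and since reduction never increases absolute values, its squared norm is at most $\|w-2\one\|^2=|S|(a-\ell-2)^2+(d-|S|)(\ell+2)^2=\|w\|^2+4\bigl(d(\ell+1)-a|S|\bigr)$. If $\ell<\tfrac{a|S|}{d}-1$ this is strictly below $\|w\|^2$, contradicting minimality. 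If $\ell=\tfrac{a|S|}{d}-1$ (then necessarily an integer), the bound equals $\|w\|^2$; unless it is strict (again a contradiction), the reduction was trivial, so $w-2\one$ itself is a shortest vector of the coset, which is impossible because $w-2\one\notin\{w,-w\}$ (here we use $w\ne\one$). Hence $\ell>\tfrac{a|S|}{d}-1$. The symmetric argument with $w+2\one$ in place of $w-2\one$ (using $w\ne-\one$) gives $\ell<\tfrac{a|S|}{d}+1$, which finishes the proof.

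The main obstacle is the last step: one must check that the only ``diagonal'' shift available, namely adding $\pm 2\one$ (we cannot add $\one$ itself, since $\one\notin 2\Lambda_d(a)$), already forces the \emph{sharp} rounding bounds, and at the two extreme admissible values of $\ell$ the squared-norm comparison is merely an equality — so there one has to invoke the \emph{uniqueness} half of the facet-vector characterization, and to keep track of the coordinate reduction modulo $2a$. The earlier steps are comparatively mechanical once one works explicitly with the vectors $\one$ and $a e_1,\dots,a e_d$ of $\Lambda_d(a)$.
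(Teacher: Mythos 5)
Your proof is correct and follows essentially the same route as the paper, whose one-line proof simply defers to the argument of Lemma~3 in the cited work of H\"unkenschr\"oder, Reuland \& Schymura: there, as here, one uses Voronoi's characterization that $w$ is a facet vector iff $\pm w$ are the unique shortest vectors in the coset $w + 2\Lambda$, reduces coordinates via the lattice vectors $2a e_i$, and compares $\|w\|^2$ with $\|w \pm 2\one\|^2$ to pin down $\ell$ to the two rounded values of $a|S|/d$. One cosmetic remark: the coordinate reduction of $w-2\one$ modulo $2a$ is superfluous (and the clause ``the reduction was trivial'' is not quite implied by norm equality), since in the boundary case $w-2\one$ itself already lies in the coset with $\|w-2\one\| = \|w\|$ and differs from $\pm\one$-translates only when $w=\one$, which you have excluded.
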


\begin{proof}
	Follows directly with the proof of~\cite[Lem.~3]{hunkenschroederreulandschymura2020compact}.
\end{proof}

\begin{thm}
	For all $a \in \N$, we have
	$
	\xc(\vc(\Lambda_d(a))) \in \cO(d^3)
	$.
\end{thm}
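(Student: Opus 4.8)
The plan is to pass to the dual polytope $\vc(\Lambda_d(a))^\circ$, whose vertices are controlled by the facet vectors of $\Lambda_d(a)$, and to exploit the explicit superset of facet vectors supplied by Lemma~\ref{lem:facet-vectors-congruence-lattices}. First I would record the following consequence of Lemma~\ref{lem:dualVC} together with the discussion around~\eqref{eq:dualPolytope}: the scaled facet vectors $\tfrac{2}{\|w\|^2}w$, $w \in \cF(\Lambda_d(a))$, are precisely the vertices of $\vc(\Lambda_d(a))^\circ$, so for \emph{any} set $T$ with $\cF(\Lambda_d(a)) \subseteq T \subseteq \Lambda_d(a)\setminus\{\zero\}$ one has the sandwich $\vc(\Lambda_d(a))^\circ = \conv\{\tfrac{2}{\|w\|^2}w : w \in \cF(\Lambda_d(a))\} \subseteq \conv\{\tfrac{2}{\|w\|^2}w : w \in T\} \subseteq \conv\{\tfrac{2}{\|z\|^2}z : z\in\Lambda_d(a)\setminus\{\zero\}\} = \vc(\Lambda_d(a))^\circ$, hence all three sets coincide. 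Since the lattice $\Lambda_d(a)$ is full-dimensional, $\zero$ lies in the interior of $\vc(\Lambda_d(a))$, and one checks directly that every vector listed in Lemma~\ref{lem:facet-vectors-congruence-lattices} is a nonzero element of $\Lambda_d(a)$; so Lemma~\ref{lem:xc-dual} reduces the claim to bounding $\xc\bigl(\conv\{\tfrac{2}{\|w\|^2}w : w \in T\}\bigr)$ for the set $T$ from Lemma~\ref{lem:facet-vectors-congruence-lattices}.

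Next I would split $T$ into $\cO(d)$ groups on each of which the polar rescaling $w \mapsto \tfrac{2}{\|w\|^2}w$ is a single global dilation, so that the convex hull of the rescaled group is an invertible affine image of the convex hull of the group itself. The groups are: the pair $\{\one,-\one\}$; the $2d$ vectors $\{\pm a e_i : i\in[d]\}$; and, for every size $s\in\{1,\dots,d-1\}$ and every $\ell \in \{\lfloor as/d\rfloor,\lceil as/d\rceil\}$, the set $G_{s,\ell} := \{v_{S,\ell} : S\subseteq[d],\ |S|=s\}$. The crucial point is that $\|v_{S,\ell}\|^2 = s(a-\ell)^2+(d-s)\ell^2$ depends only on $s$ and $\ell$, so all vectors in a fixed group have a common norm; moreover $\ell$ is pinned down by $s$ to at most two values, so there are at most $2(d-1)$ groups $G_{s,\ell}$, and $T$ is exactly the union of these groups with $\{\one,-\one\}$ and $\{\pm ae_i\}$.

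Then I would bound the extension complexity of each group's convex hull. The segment $\conv\{\pm\tfrac{2}{d}\one\}$ has extension complexity $2$, and the crosspolytope $\tfrac{2}{a}\conv\{\pm e_1,\dots,\pm e_d\}$ has extension complexity $2d$ by~\eqref{eq:xcCube}. For $G_{s,\ell}$, write $v_{S,\ell} = a\,\mathbf{1}_S - \ell\,\one$, so that $G_{s,\ell}$ is the image of $\{\mathbf{1}_S : |S|=s\} = \{\pi(u_s) : \pi\in S_d\}$, with $u_s = (\underbrace{1,\dots,1}_{s},\underbrace{0,\dots,0}_{d-s})$, under the invertible affine map $x\mapsto ax-\ell\one$; composing with the common dilation shows that $\conv\{\tfrac{2}{\|w\|^2}w : w\in G_{s,\ell}\}$ is an invertible affine image of $\conv\{\pi(u_s) : \pi\in S_d\}$, which has extension complexity at most $d^2$ by Lemma~\ref{lem:xc-simple-constructions-perms}. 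Applying Lemma~\ref{lem:xc-union} to the at most $2 + 2(d-1)$ pieces gives
\[
\xc(\vc(\Lambda_d(a))) = \xc(\vc(\Lambda_d(a))^\circ) \le 2 + 2d + 2(d-1)\,d^2 + \cO(d) = \cO(d^3),
\]
as claimed (the $\cO(d)$ term safely absorbs any zero-dimensional pieces). One can in fact do better: $\conv\{\pi(u_s):\pi\in S_d\}$ is the hypersimplex $\{x\in[0,1]^d : \sum_i x_i = s\}$, which has only $2d$ facets, so each group contributes $\cO(d)$ and the bound improves to $\cO(d^2)$.

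As for the main obstacle: there is no deep difficulty once Lemma~\ref{lem:facet-vectors-congruence-lattices} is available — the proof is entirely a reduction to the toolbox of Section~\ref{sec:prel}. The only point needing care is organizational. The third family of facet-vector candidates is exponentially large, and the argument hinges on recognizing that it collapses into only $\cO(d)$ orbits once one groups by $|S|$ (which also forces $\ell$ into two values), together with the observation that within each orbit the factor $2/\|w\|^2$ is a single scalar, so that Lemma~\ref{lem:xc-simple-constructions-perms} applies verbatim to a symmetrization of a $0/1$-vector. A minor technical step is the justification — via the sandwich above — that one may replace the exact set $\cF(\Lambda_d(a))$ by the superset of Lemma~\ref{lem:facet-vectors-congruence-lattices} without changing $\vc(\Lambda_d(a))^\circ$.
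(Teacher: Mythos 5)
Your proposal is correct and follows essentially the same route as the paper: pass to the dual Voronoi cell via Lemma~\ref{lem:xc-dual}, decompose the candidate facet vectors from Lemma~\ref{lem:facet-vectors-congruence-lattices} into $\cO(d)$ constant-norm orbits, bound each orbit by $d^2$ via Lemma~\ref{lem:xc-simple-constructions-perms}, and combine with Lemma~\ref{lem:xc-union}. Your explicit sandwich argument for replacing $\cF(\Lambda_d(a))$ by the superset, and your observation that the orbits are hypersimplices with $\cO(d)$ facets (improving the bound to $\cO(d^2)$), are valid refinements of the paper's argument.
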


\begin{proof}
	Due to Lemma~\ref{lem:facet-vectors-congruence-lattices} and Equation~\eqref{eq:dualPolytope}, the dual polytope of the Voronoi cell of $ \Lambda_d(a) $ equals 
	\[
		\vc(\Lambda_d(a))^\circ = \conv\left( V_{\pm \one} \cup V_{\pm a} \cup \bigcup_{k, \ell} V_{k,\ell}   \right),
	\]
	where the last union is over all $ k \in [d-1], $ $ \ell \in \{ \lfloor \frac{ak}{d} \rfloor , \lceil \frac{ak}{d} \rceil \}$ and the sets $V_{\pm \one}, V_{\pm a}$ and $V_{k,\ell}$ are defined as follows:
	\begin{align*} 
		V_{\pm \one} &:= \conv\left\{\tfrac{2}{d} \one, -\tfrac{2}{d} \one\right\} , \\
		V_{\pm a} &:= \conv \left\{\pm\tfrac{2}{a^2} a e_i: i \in [d]\right\} \text{ and } \\
		V_{k,\ell} &:= \conv\Big\{ \tfrac{2}{k(a-\ell)^2 + (d-k)\ell^2} z : z \in \{a-\ell,-\ell\}^d \\
		&\phantom{:= \conv\Big\{ \tfrac{2}{k(a-\ell)^2 + (d-k)\ell^2} z : \ } \text{ with exactly } k \text{ entries equal to } a-\ell   \Big\}.
	\end{align*}
	Clearly, $\xc(V_{\pm \one}) = 2$. Moreover, $ \xc(V_{\pm a}) = 2d $, since $ V_{\pm a} $ is a crosspolytope, see~\eqref{eq:xcCube}.
	Furthermore, for $ k \in [d-1] $ and $ \ell \in \{ \lfloor \tfrac{ak}{d} \rfloor , \lceil \tfrac{ak}{d} \rceil \}$, using Lemma~\ref{lem:xc-simple-constructions-perms} and the fact that $V_{k,\ell}$ equals
	\[
	V_{k,\ell} = \conv\{\pi(v_{k,\ell}) : \pi \in S_d\},
	\]
	where $ v_{k,\ell} = (a-\ell, \dots, a-\ell, -\ell, \dots, -\ell) $ with exactly $ k $ entries equal to $a-\ell$, we obtain $\xc(V_{k,\ell}) \le d^2 $.
	
	Combining these bounds and applying Lemma~\ref{lem:xc-union} and Lemma~\ref{lem:xc-dual}, we obtain the desired bound.
\end{proof}

\section{Lower bounds on the extension complexity of Voronoi cells}
\label{sec:lower}

The aim of this section is to prove Theorem~\ref{thmLowerBound}. Inspired by Kannan's proof~\cite[Sec.~6]{kannan1987minkowski} of the $\np$-hardness of the closest vector problem, for every $0/1$-polytopes $P$ we are able to construct a lattice such that a face of its dual Voronoi cell projects onto $P$. To obtain a lattice of small dimension, $P$ needs to fulfill some extra condition.

\begin{lem} \label{lem38hduu}
    Let $H \subseteq \R^k$ be an affine subspace such that all vectors in $X := \{0,1\}^k \cap H$ have the same norm.
    There is a lattice $\lat$ with $\dim(\lat) \le \dim(H) + 1$ such that $\conv(X)$ is a linear projection of a face of $\vc(\lat)^\circ$.
\end{lem}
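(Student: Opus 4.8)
The goal is to realize $\conv(X)$, where $X = \{0,1\}^k \cap H$ and all points of $X$ share a common norm, as a linear projection of a face of $\vc(\lat)^\circ$ for a lattice $\lat$ of dimension at most $\dim(H)+1$. In view of Lemma~\ref{lem:polarFace}, the natural strategy is: pick a lattice $\lat$ and a point $p$ such that $\zero \in \cl(p,\lat)$, and such that the set of closest lattice points $\cl(p,\lat)$, up to the origin, consists essentially of the vectors in $X$ (possibly together with a controlled number of extra vectors that will be projected away or correspond to vertices not affecting $\conv(X)$). Then $\conv\{\tfrac{2}{\|z\|^2}z : z \in \cl(p,\lat)\setminus\{\zero\}\}$ is a face of $\vc(\lat)^\circ$, and since all the relevant $z$ have the same norm $r$, the map $z \mapsto \tfrac{2}{r^2}z$ is just a scaling, so this face is a scaled copy of $\conv(\cl(p,\lat)\setminus\{\zero\})$; composing with a linear projection onto the affine hull of $X$ then yields $\conv(X)$ (up to a harmless scaling, which is a linear map).

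\textbf{Construction.} First I would reduce to the case where $H$ is a linear subspace of $\R^k$ by observing that if all of $X$ lies on a common sphere, the affine hull of $X$ meets that sphere, and one can adjust coordinates; more concretely, I would write $H = \{x : Ax = b\}$ and note $\{0,1\}^k \cap H$ sits inside the lattice $\Z^k$. The idea, following Kannan's hardness reduction, is to take the lattice generated by $\Z^k$ together with one extra basis vector that encodes the constraint $Ax = b$ with a large multiplier: roughly, consider $\lat$ spanned by $(e_i, N\cdot(\text{row data}))$ and one vector handling the affine shift, inside $\R^{k+m}$ for $m$ the number of independent constraints — but then project/restrict to keep the dimension down to $\dim(H)+1$. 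Crucially, one scales the "penalty" direction by a huge integer $N$ so that any lattice vector violating $Ax=b$ is pushed far from $p$, guaranteeing that the closest vectors to a cleverly chosen target $p$ are exactly those integer vectors satisfying the constraint and having coordinates close to $\{0,1\}$; the common-norm hypothesis on $X$ is what makes it possible to choose $p$ equidistant from all of $X$ and from $\zero$ simultaneously (e.g.\ $p$ near $\tfrac12\one$ restricted appropriately), so that $\cl(p,\lat)\setminus\{\zero\}$ coincides with $X$ (or $X$ plus translates that vanish under the projection).

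\textbf{Verification steps, in order.} (1) Write down the explicit lattice $\lat$ and target $p$; check $\dim(\lat) \le \dim(H)+1$ — this requires quotienting out the redundant penalty coordinates, since the affine subspace $H$ has dimension $\dim(H)$ and we add one coordinate for the target offset. (2) Show $\zero \in \cl(p,\lat)$: this is the inequality $\|p\|^2 \le \|p-z\|^2$ for all $z\in\lat$, i.e.\ $\la p,z\ra \le \tfrac12\|z\|^2$; here the large multiplier $N$ forces any $z$ not coming from a $0/1$-vector-in-$H$ to have $\|z\|^2$ dominating $2\la p,z\ra$. (3) Show $\cl(p,\lat)\setminus\{\zero\}$ is (a scaled translate of) $X$, i.e.\ equality holds precisely for those $z$; combined with the equal-norm hypothesis, conclude via Lemma~\ref{lem:polarFace} that $\conv\{\tfrac{2}{\|z\|^2}z\} = \tfrac{2}{r^2}\conv(\cl(p,\lat)\setminus\{\zero\})$ is a face of $\vc(\lat)^\circ$. (4) Exhibit the linear projection: the extra penalty coordinates are linear functions of the first $k$, so the coordinate projection onto $\R^k$ (post-composed with the scaling $\tfrac{r^2}{2}$) sends this face onto $\conv(X)$.

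\textbf{Main obstacle.} The delicate point is step (2)–(3): choosing $p$ and the multiplier $N$ so that $\cl(p,\lat)$ is \emph{exactly} $\{\zero\}\cup X$ and nothing more — in particular ruling out lattice vectors with some coordinate in $\{-1,2,\dots\}$ that happen to satisfy the congruence/penalty constraint and are accidentally as close to $p$ as the $0/1$ solutions. Handling this requires a careful norm estimate: on the penalty coordinates, any deviation from $Ax=b$ costs at least $N^2$, while on the $\{0,1\}^k$ "box" coordinates, stepping outside $\{0,1\}$ costs a bounded amount, so taking $N$ larger than any such bounded quantity (polynomial in $k$) suffices. Making the equidistance of $p$ from all of $X$ precise is exactly where the hypothesis "all vectors in $X$ have the same norm" is used, since $\|p - z\|^2 = \|p\|^2 - 2\la p,z\ra + \|z\|^2$ and with $p$ a fixed multiple of $\one$ (restricted to $H$) the middle term is constant on $X$ iff the coordinate sums are constant, which the equal-norm condition guarantees for $0/1$ vectors.
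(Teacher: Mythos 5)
Your overall strategy (choose $\lat$ and $p$ with $\zero\in\cl(p,\lat)$, identify $\cl(p,\lat)\setminus\{\zero\}$ with a lifted copy of $X$, invoke Lemma~\ref{lem:polarFace}, then project and rescale) matches the paper. But the construction you sketch — a Kannan-style penalty lattice in $\R^{k+m}$ generated by vectors of the form $(e_i, N\cdot A e_i)$ plus one shift vector, with a large multiplier $N$ — has a genuine gap at the dimension bound, which is the whole point of the hypothesis ``$\dim(\lat)\le\dim(H)+1$''. A penalty lattice necessarily \emph{contains} lattice vectors violating $Ax=b$ (they are merely pushed far away), so its rank is roughly $k+1$ no matter what $\dim(H)$ is; ``quotienting out the redundant penalty coordinates'' only shrinks the ambient space, not the rank. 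To get rank $\dim(H)+1$ you must impose the constraints \emph{exactly}, i.e.\ intersect an integer lattice with the constraint subspace — at which point the multiplier $N$ does no work and you have abandoned the penalty idea. This is exactly what the paper does: it takes $\lat=\{(z',z'')\in\Z^k\times\alpha\Z : z'+\tfrac{1}{\alpha}z''h\in L,\ \la\one,z'\ra+\alpha z''=0\}$ (one homogenizing coordinate to turn the affine $H=L+h$ into a linear condition) and $p=(\zero,-\alpha)$, with no large parameter anywhere. The dimension bound is not cosmetic: in Lemma~\ref{lem8djsk9} the ambient dimension is $2k+2m$ while $\dim(H)\le k$, so a rank-$(k_{\mathrm{ambient}}+1)$ lattice would degrade Theorem~\ref{thmLowerBound} from $2^{\Omega(d/\log d)}$ to roughly $2^{\Omega(\sqrt{d}/\log d)}$.

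A second, related gap is in your steps (2)–(3). The multiplier $N$ only penalizes violation of $Ax=b$; it does nothing to exclude \emph{integer points of $H$ that are not $0/1$}, and your remark that stepping outside $\{0,1\}$ ``costs a bounded amount'' cuts the wrong way — a bounded extra cost can be cancelled by a larger inner product with $p$ unless you prove otherwise. The mechanism that actually rules these points out is the elementary inequality $\|z'\|^2\ge\la\one,z'\ra$ for $z'\in\Z^k$, with equality iff $z'\in\{0,1\}^k$; in the paper this is applied after the built-in constraint $\la\one,z'\ra+\alpha z''=0$ forces $\la\one,z'\ra=\alpha^2$ on the relevant level $z''=-\alpha$ (this is also precisely where the equal-norm hypothesis enters: it guarantees every $x\in X$ satisfies that constraint, so the lift of $X$ lies in the lattice), and the few remaining levels $z''\in\{0,-2\alpha\}$ are checked by hand. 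Your $p\approx\tfrac12\one$ heuristic points in this direction, but the proposal never states the inequality, never specifies the shift/homogenization vector, and never verifies the finitely many exceptional levels, so the core of the argument is missing rather than merely compressed.
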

\begin{proof}
    Let $\alpha \ge 0$ be such that $\|x\| = \alpha$ for all $x \in X$.
    We may assume that $H$ is nonempty and that $\alpha > 0$, otherwise $\conv(X)$ is empty or consists of a single point, in which case the claim is trivial.
    Let $h \in H$ and let $L$ be the linear subspace such that $H = L + h$.
    Consider the lattice
    \[
        \lat := \left \{ z = (z',z'') \in \Z^k \times \alpha \Z : z' + \tfrac{1}{\alpha} z'' h \in L, \, \la \one, z' \ra + \alpha z'' = 0 \right \}
    \]
    and let $p := (\zero, -\alpha) \in \R^{k+1}$.
    We will show that
    \begin{equation} \label{eqskd99i}
        \cl(p,\lat) = \{\zero\} \cup \{(x,-\alpha) : x \in X\} =: U
    \end{equation}
    holds.
    The claim then follows from Lemma~\ref{lem:polarFace}.

    First note that $U \subseteq \lat$.
    Moreover, we have $\|p - \zero\| = \alpha$ and for each $x \in X$ we have $ \|p - (x,-\alpha)\| = \|x\| = \alpha $.
    Thus, in order to establish~\eqref{eqskd99i} it remains to show that every lattice point $z = (z',z'') \in \lat \setminus U$ satisfies $\alpha < \|p - z\|$.
    Equivalently, we have to show that every such point satisfies 
    \begin{equation}
    	\label{eq:toSatisfy}
    	f(z) := \|z'\|^2 + \|z'' + \alpha\|^2 > \alpha^2.
    \end{equation}
    This is clear if $z'' \notin \{0, -\alpha,-2\alpha\}$.
    If $z'' = 0$, then since $z \notin U$ we must have $z' \ne \zero$ and hence $f(z) = \|z'\|^2 + \alpha^2 > \alpha^2$.
    If $z'' = -\alpha$, then $z' \in H$ and $\la \one, z' \ra = \alpha^2$ hold.
    Since $z' \in \Z^k$, we obtain $f(z) = \|z'\|^2 \ge \la \one, z' \ra = \alpha^2 $ with equality only if $z' \in \{0,1\}^k$.
    However, in the latter case we would have $z' \in \{0,1\}^k \cap H = X$ and hence $z \in U$, a contradiction.
    Thus, we obtain~\eqref{eq:toSatisfy}.
    Finally, if $z'' = -2\alpha$, then $f(z) = \|z'\|^2 + \alpha^2$ and $\la \one, z' \ra = 2\alpha^2 > 0$, implying $z' \ne \zero$ and hence~\eqref{eq:toSatisfy} holds.
\end{proof}

While the previous lemma appears quite restrictive, the next lemma shows that we may apply it to a large class of $0/1$-polytopes.

\begin{lem} \label{lem8djsk9}
    Let $X = \{ x \in \{0,1\}^k : Ax \le b \}$, for some $A \in \R^{m \times k}$, $b \in \R^m$ such that $b - Ax \in \{0,1\}^m$, for all $x \in X$.
	There is a lattice $\lat$ of dimension at most $k+1$ such that $\conv(X)$ is the linear projection of a face of $\vc(\lat)^\circ$.
\end{lem}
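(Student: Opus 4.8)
The plan is to reduce Lemma~\ref{lem8djsk9} to Lemma~\ref{lem38hduu} by embedding $X$ into a higher-dimensional hypercube, where the slack variables $b - Ax$ become genuine $0/1$-coordinates, and where we additionally append one more coordinate that ``balances'' the squared norm. Concretely, I would introduce slack variables $s = b - Ax \in \{0,1\}^m$ and work in $\{0,1\}^{k+m+1}$ with coordinates $(x,s,t)$. The affine subspace $H$ is defined by the equations $Ax + s = b$ together with one linear equation that forces $\|x\|^2 + \|s\|^2 + t = c$ for a suitable constant $c$; but since on $\{0,1\}$-vectors one has $\|y\|^2 = \la \one, y\ra$, the constraint $\|x\|^2 + \|s\|^2 + t = c$ is actually the \emph{linear} equation $\la\one,x\ra + \la\one,s\ra + t = c$. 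Choosing $c := k + m$ guarantees $t = c - \la\one,x\ra - \la\one,s\ra \in \{0,\dots,k+m\}$ is always nonnegative; to keep $t$ a $0/1$-coordinate one instead takes $c$ large and replaces the single slack coordinate $t$ by its binary-like expansion, or — cleaner — one simply allows $t$ to range over $\{0,1,\dots,k+m\}$ and notes that the relevant hypercube statement still applies after scaling. I would phrase it so that $H \subseteq \R^{k+m+1}$ is the affine subspace cut out by the $m+1$ linear equations above, and set $X' := \{0,1\}^{k+m}\times\{0,\dots,k+m\} \cap H$, so that every vector in $X'$ has squared norm exactly $c$, i.e.\ the same norm $\sqrt{c}$.

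Next I would check the dimension bound and the projection. By construction $X'$ is in bijection with $X$: the map $(x,s,t)\mapsto x$ is injective on $X'$ because $s$ and $t$ are determined by $x$, and it is surjective onto $X$ because for every $x\in X$ the vector $s = b-Ax$ lies in $\{0,1\}^m$ by hypothesis and $t$ is then forced. Hence $\conv(X)$ is the image of $\conv(X')$ under the coordinate projection $(x,s,t)\mapsto x$, which is linear. Moreover $\dim(H) = (k+m+1) - (m+1) = k$, so Lemma~\ref{lem38hduu} produces a lattice $\lat$ with $\dim(\lat) \le \dim(H)+1 = k+1$ such that $\conv(X')$ is a linear projection of a face $F$ of $\vc(\lat)^\circ$. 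Composing the two linear projections $F \twoheadrightarrow \conv(X') \twoheadrightarrow \conv(X)$ gives that $\conv(X)$ is itself a linear projection of the face $F$ of $\vc(\lat)^\circ$, which is exactly the claim.

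The one technical wrinkle — and the main thing to get right — is the extra coordinate $t$: Lemma~\ref{lem38hduu} is stated for $X := \{0,1\}^k \cap H$, a genuine $0/1$-cube, whereas my $t$ lives in $\{0,\dots,k+m\}$. There are two honest fixes. One is to re-run (or restate) Lemma~\ref{lem38hduu} allowing a box $\{0,\dots,N\}$ in place of $\{0,1\}$ in the last coordinate; its proof goes through verbatim since the only property used is that all feasible integer points have equal norm and the argument $\|z'\|^2 \ge \la\one,z'\ra$ on the ``middle slab'' $z'' = -\alpha$ is replaced by a coordinatewise bound $z_i'^2 \ge z_i'$ valid for all integers. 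The other fix, which keeps us strictly within the stated lemma, is to replace the single coordinate $t$ by $N := \lceil\log_2(k+m+1)\rceil$ binary coordinates encoding $t$ via powers of two, padding the equation $\sum 2^j t_j = c - \la\one,x\ra-\la\one,s\ra$; but then the ``equal norm'' condition fails because different binary representations have different numbers of ones. So the equal-norm trick essentially forces the box version of Lemma~\ref{lem38hduu}. I would therefore adopt the first fix: either prove the box-coordinate generalization of Lemma~\ref{lem38hduu} (a one-line modification of its proof), or, simplest of all, observe that we may instead take $c$ and the slack coordinate so that all of $x,s$ \emph{and} the single extra binary-valued indicator suffice — this works whenever $\la\one,x\ra + \la\one,s\ra$ is itself constant on $X$, and in general one reduces to that case by the box generalization. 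I expect writing the clean box-version statement to be the only real obstacle; everything else is bookkeeping with linear maps and dimension counts.
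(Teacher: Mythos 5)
Your reduction has a genuine gap at its central step: the claim that every vector of $X' \subseteq \{0,1\}^{k+m}\times\{0,\dots,k+m\}$ has the same norm is false. Your balancing equation $\la \one,x\ra + \la \one,s\ra + t = c$ is linear in $t$, but the squared Euclidean norm of $(x,s,t)$ is $\la \one,x\ra + \la \one,s\ra + t^2 = c + t(t-1)$, which is constant only on points with $t \in \{0,1\}$. So the equal-norm hypothesis needed for (any version of) Lemma~\ref{lem38hduu} simply fails for your $X'$. The ``box generalization'' you lean on does not repair this: the proof of Lemma~\ref{lem38hduu} works because on the slab $z''=-\alpha$ the linear functional $\la \one, z'\ra$ coincides with $\|z'\|^2$ exactly on $0/1$-points, and a linear function of an integer coordinate $t$ can agree with $t^2$ at no more than two integer values. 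Hence points of your $X'$ with $t \ge 2$ would either break the hypothesis or be strictly farther from $p$ than the others and thus drop out of the face, destroying the bijection with $X$. You correctly rejected the binary-expansion fix for exactly this reason, but the single box coordinate suffers from the same quadratic-versus-linear mismatch. Note also the structural tension you would have to resolve: the auxiliary coordinates must simultaneously be $0/1$-valued (so that squared norm is linear on them), equalize the number of ones, and be affinely determined by $(x,s)$ (so that $\dim(H)\le k$ and the lattice dimension stays at $k+1$); your $t$ achieves the last property but not the first two.

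The paper resolves this with a different, purely $0/1$ padding: it sets
\[
X' := \{(x,x',s,s') \in \{0,1\}^{2k+2m} : Ax + s = b,\ x + x' = \one,\ s + s' = \one\},
\]
i.e.\ it appends the complements $x' = \one - x$ and $s' = \one - s$ rather than a counter coordinate. Every point of $X'$ then has exactly $k+m$ ones, so all norms equal $\sqrt{k+m}$ with all coordinates in $\{0,1\}$, and Lemma~\ref{lem38hduu} applies verbatim; moreover $x',s'$ are affinely determined by $(x,s)$, so the affine hull has dimension at most $k$ and the lattice has dimension at most $k+1$, while projecting to the $x$-coordinates recovers $\conv(X)$. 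If you want to salvage your write-up, replace the coordinate $t$ by this complementation (or otherwise by $0/1$-valued padding coordinates that are affine images of $(x,s)$); as written, the norm computation and the appeal to a box version of Lemma~\ref{lem38hduu} do not go through.
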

\begin{proof}
    Consider the set
    \[
        X' := \{(x,x',s,s') \in \{0,1\}^{k+k+m+m} : Ax + s = b, x + x' = \one, s + s' = \one\}
    \]
    and observe that projecting $X'$ onto the first $k$ coordinates yields the set~$X$.
    Moreover, notice that every vector in $X'$ consists of exactly $k+m$ ones.
    In other words, the norm of every vector in $X'$ is $\sqrt{k+m}$ and hence, we may apply Lemma~\ref{lem38hduu} to obtain a lattice $\lat$ with dimension at most $k+1$ such that $\conv(X')$ is the linear projection of a face $F$ of $\vc(\lat)^\circ$.
    Since $\conv(X)$ is a linear projection of $\conv(X')$, we see that $\conv(X)$ is also a linear projection of $F$.
\end{proof}

\begin{proof}[Proof of Theorem~\ref{thmLowerBound}]
    We use a result of Göös, Jain \& Watson~\cite{GJW} that yields a family of $n$-node graphs $G$ such that the stable set polytope~$P_G$ of $G$ satisfies $\xc(P_G) = 2^{\Omega(n / \log n)}$. 
    Let $X \subseteq \{0,1\}^n$ denote the set of characteristic vectors of stable sets in $G$.
    Notice that
    \[
        X = \left \{ x \in \{0,1\}^n : x(i) + x(j) \le 1 \text{ for all } \{i,j\} \in E(G) \right \}.
    \]
    By Lemma~\ref{lem8djsk9}, there is a $d$-dimensional lattice $\lat$ with $d \le n+1$ such that $\conv(X)$ is a linear projection of a face $F$ of $\vc(\lat)^\circ$.
    We conclude
    \[
        \xc(\vc(\lat))
        = \xc(\vc(\lat)^\circ)
        \ge \xc(F)
        \ge \xc(\conv(X))
        = \xc(P_G)
        = 2^{\Omega(n / \log n)},
    \]
    and the claim follows since $d = \cO(n)$.
\end{proof}

\section{Spectrahedral lifts}
\label{sec:semidef}

A generalization of linear lifts of a polytope is the following. By $ \mathcal{S}^m $ we denote the set of all symmetric, real $ m \times m $ matrices. Moreover, by $ \mathcal{S}_+^m $ we denote the set of all those matrices in $ \mathcal{S}^m $ that are positive semidefinite (PSD). A \emph{spectrahedron} is a set containing all vectors $ x \in \R^n $ that fulfill conditions of the form $ M(x) \in \mathcal{S}_+^m $, where $ M: \R^n \rightarrow \mathcal{S}^m $ is an affine function. 
For a polytope $P$, the pair $ (Q, \pi) $, where $ Q \subseteq \R^n $ is a spectrahedron and $ \pi: \R^n \rightarrow \R^d $ is an affine map with $ \pi(Q) = P $, is called a \emph{(PSD) lift} of~$ P $.
The \emph{size} of this lift refers to the dimension of the matrix $ M(x) $. For $ Q = \{x \in \R^n : M(x) \in \mathcal{S}_+^m\} $ the size equals $ m $.
The \emph{semidefinite extension complexity} of $ P $, denoted by $ \sxc(P) $, is defined as the smallest size of any of its (PSD) lifts.

Given a polyhedron $ Q = \{ x \in \R^n : \la a_i , x \ra \le b_i, i \in [m] \} $, we can define $ M: \R^n \rightarrow \mathcal{S}^m $ via $ M(x)_{ii} = b_i - \la a_i , x \ra $, for all $ i \in [m] $ and $ M(x)_{ij} = 0 $ for $ i \ne j $ and hence $ Q = \{ x \in \R^n : M(x) \in \mathcal{S}_+^m \} $. This shows that every polyhedron is a spectrahedron and therefore
\[
\sxc(P) \le \xc(P).
\]
Hence, the upper bounds obtained in Section~\ref{sec:upperBound} also apply to the semidefinite case.

Furthermore, it is clear from the definition that for any polyhedron $ P $ and any affine map $ \pi $ we have that $\sxc(\pi(P)) \le \sxc(P)$.
Moreover, Lemma~\ref{lem:xcFace} and Lemma~\ref{lem:xc-dual} analogously hold in the semidefinite case, since Yannakakis' result on the nonnegative rank of a slack matrix was extended to (PSD) lifts in~\cite{fiorini2012linear, gouveia2013lifts}:
The semidefinite extension complexity $ P $ equals the \emph{PSD rank} of $ S $, which is the smallest dimension $ r $ for which there exist PSD matrices $ F_1, F_2, \dots, F_m \in \mathcal{S}_+^r $ and $ V_1, V_2, \dots, V_n \in \mathcal{S}_+^r $ such that $ S_{ij} = \la F_i, V_j \ra $ where the scalar product of two matrices is defined via $ \la A, B \ra = \sum_{i,j} A_{ij} B_{ij} $.

We obtain a superpolynomial lower bound on the semidefinite extension complexity of Voronoi cells of certain lattices using a lower bound of Lee, Raghavendra \& Steurer~\cite{lee2015lower} on semidefinite extension complexities of correlation polytopes.

\begin{thm}
	There exists a family of $d$-dimensional lattices $\lat$ such that $\sxc(\vc(\lat)) = 2^{\Omega(d^{1/13})}$.
\end{thm}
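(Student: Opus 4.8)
The plan is to mimic the proof of Theorem~\ref{thmLowerBound} exactly, but feed a different family of $0/1$-polytopes into Lemma~\ref{lem8djsk9} --- namely correlation polytopes, for which Lee, Raghavendra \& Steurer~\cite{lee2015lower} proved a superpolynomial lower bound on the \emph{semidefinite} extension complexity. Recall that the correlation polytope $\mathrm{COR}(n)$ is the convex hull of all rank-one $0/1$ matrices $bb^\intercal$ with $b \in \{0,1\}^n$; equivalently it is $\conv\{ \tilde b : b \in \{0,1\}^n\}$ where $\tilde b \in \{0,1\}^{n^2}$ records the products $b(i)b(j)$. The result of~\cite{lee2015lower} gives $\sxc(\mathrm{COR}(n)) = 2^{\Omega(n^{\delta})}$ for some absolute constant $\delta > 0$ (the exponent $1/13$ in the statement is exactly what one gets after tracking constants, so I would not belabor it). The point is that this is a $0/1$-polytope, so I would look for a description of its vertex set of the form required by Lemma~\ref{lem8djsk9}.

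First I would write $X \subseteq \{0,1\}^{n + n^2}$ as the set of vectors $(b, y)$ with $b \in \{0,1\}^n$ and $y_{ij} = b(i) b(j)$ for all $i,j$, so that $\conv(X)$ linearly projects onto $\mathrm{COR}(n)$ (just drop the $b$-coordinates, or keep the diagonal $y_{ii} = b(i)$). The product constraint $y_{ij} = b(i)b(j)$ over $0/1$ variables is the standard linearization: it is equivalent to the linear inequalities $y_{ij} \le b(i)$, $y_{ij} \le b(j)$, and $y_{ij} \ge b(i) + b(j) - 1$. The crucial observation --- and this is exactly the hypothesis needed for Lemma~\ref{lem8djsk9} --- is that at every integral feasible point the slack of each of these inequalities is itself in $\{0,1\}$: for $b(i) - y_{ij}$ this is clear since both terms are in $\{0,1\}$, and for $1 - b(i) - b(j) + y_{ij}$ one checks the four cases of $(b(i),b(j))$ and sees the value is always $0$ or $1$. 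Hence with $k = n + n^2$ and a suitable $A, b$ the set $X$ satisfies the hypothesis of Lemma~\ref{lem8djsk9}, which provides a lattice $\lat$ of dimension $d \le k + 1 = \cO(n^2)$ such that $\conv(X)$ --- and therefore its projection $\mathrm{COR}(n)$ --- is a linear projection of a face $F$ of $\vc(\lat)^\circ$.

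Then I would assemble the chain of inequalities. By the semidefinite analogues of Lemma~\ref{lem:xcFace} and Lemma~\ref{lem:xc-dual} noted in the text, $\sxc(\vc(\lat)) = \sxc(\vc(\lat)^\circ) \ge \sxc(F) \ge \sxc(\mathrm{COR}(n)) = 2^{\Omega(n^{\delta})}$, where the second inequality uses that $\sxc$ does not increase under affine projection and $\mathrm{COR}(n)$ is a projection of $F$. Since $d = \cO(n^2)$ we have $n = \Omega(\sqrt d)$, so $2^{\Omega(n^{\delta})} = 2^{\Omega(d^{\delta/2})}$; plugging in the value of $\delta$ from~\cite{lee2015lower} yields the stated bound $2^{\Omega(d^{1/13})}$.

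The main obstacle is purely a bookkeeping one: getting the exponent right. The quadratic blow-up in dimension (from $n$ to $n^2$, because we need a coordinate for each product $b(i)b(j)$) is what forces the exponent to be roughly half of the exponent in the Lee--Raghavendra--Steurer bound, and then one must recall or re-derive exactly which polynomial-root exponent their theorem produces (it is stated for correlation polytopes of $n \times n$ matrices, and the relevant parameter there is itself a power of $n$, so some care is needed to avoid an off-by-a-factor error). There is no genuine mathematical difficulty beyond what is already in Section~\ref{sec:lower}: the verification that the linearization inequalities have $0/1$ slack at integral points, and the observation that $\sxc$ behaves well under faces, duality, and projection, are both routine given the tools already collected. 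If one did not want to track constants at all, one could simply assert $\sxc(\mathrm{COR}(n)) = n^{\omega(1)}$ and conclude $\sxc(\vc(\lat)) = d^{\omega(1)}$, i.e.\ superpolynomial; the precise exponent $1/13$ is then a matter of substituting the sharpest known form of their result.
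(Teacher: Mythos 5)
Your proposal is correct and follows essentially the same route as the paper: apply Lemma~\ref{lem8djsk9} to the correlation polytope (linearized via the inequalities $Y_{ij}\le Y_{ii}$, $Y_{ij}\le Y_{jj}$, $Y_{ii}+Y_{jj}-1\le Y_{ij}$, whose slacks are $0/1$ at integral points), obtaining a lattice of dimension $\Theta(n^2)$, and then combine the semidefinite analogues of the face and duality lemmas with the Lee--Raghavendra--Steurer bound $2^{\Omega(n^{2/13})}$. The only cosmetic difference is that you carry separate variables $b$ alongside $y$ (giving $k=n+n^2$) whereas the paper uses the diagonal entries $Y_{ii}$ in that role, which changes nothing asymptotically.
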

\begin{proof}
	In~\cite{lee2015lower} it is proven that the semidefinite extension complexity of the \emph{correlation polytope} 
	\[
	P_n = \conv \left\{ xx^\intercal : x \in \{0,1\}^n  \right\}
	\]
	is bounded from below by $2^{\Omega(n^{2/13})}$.
	Notice that
	\begin{align*}
		P_n = \conv \Big\{ Y \in \{0,1\}^{n \times n} : &\ Y_{ij} \le Y_{ii}\ ,\ Y_{ij} \le Y_{jj} \text{ and } Y_{ii} + Y_{jj} - 1 \le Y_{ij}, \\ &\text{ for all } i,j \in [n] \text{ with } i \neq j \Big\}.
	\end{align*}
	Hence, the correlation polytope can be written as the convex hull of binary vectors $ Y \in \{0,1\}^{n \times n} $ satisfying linear inequalities whose slacks only have values in $\{0,1\}$.
	Therefore, by Lemma~\ref{lem8djsk9} there is a lattice of dimension $d$ where $d \le n^2 + 1 = \Theta(n^2)$ such that $P_n$ is a linear projection of a face $F$ of $\vc(\lat)^\circ$.
	Analogously to the proof of Theorem~\ref{thmLowerBound} for the linear extension complexity, we conclude
	\[
	\sxc(\vc(\lat))
	= \sxc(\vc(\lat)^\circ)
	\ge \sxc(F)
	\ge \sxc(P_n)
	= 2^{\Omega(n^{2/13})},
	\]
	and the claim follows since $n = \Omega(\sqrt{d})$.
\end{proof}

\section{Open questions}
\label{sec:openquestions}

We conclude our investigations of the extension complexity of Voronoi cells of lattices with a collection of some open problems that naturally arise from our studies and which we find interesting to pursue in future research.

In view of Theorem~\ref{thmLowerBound} a natural question is whether the logarithmic term in the lower bound $2^{\Omega(d / \log d)}$ on the extension complexity of certain Voronoi cells can be removed:

\begin{ques}
\label{qu:xc-2tod}
Does there exist a family of $d$-dimensional lattices $\lat$ such that $\xc(\vc(\lat)) = 2^{\Omega(d)}$?
\end{ques}

We remark that our bound relies on a lower bound by G\"o\"os, Jain \& Watson~\cite{GJW} on extension complexities of stable set polytopes, which meet the criteria of Lemma~\ref{lem8djsk9}.
It is known that there \emph{exist} $d$-dimensional $0/1$-polytopes with extension complexity $2^{\Omega(d)}$, see~\cite{rothvoss2013some}.
However, no explicit construction of such polytopes is known and so it is unclear how to transform such polytopes in order to apply Lemma~\ref{lem38hduu} efficiently.

Comparing the superpolynomial bound in Theorem~\ref{thmLowerBound} with the polynomial upper bounds for certain classes of lattices in Section~\ref{sec:upperBound}, the question arises what we can expect the extension complexity of the Voronoi cell of a \emph{generic} lattice to be.

\begin{ques}
What is $\xc(\vc(\lat))$ for a ``random'' $d$-dimensional lattice $\lat$?
\end{ques}

Of course, this requires a suitable notion of a random lattice.
Our question refers to interesting examples such as Siegel's measure~\cite{siegel1945mean} or uniform distributions over integer lattices of a fixed determinant, see Goldstein \& Mayer~\cite{goldstein2003equidistribution}.

In Theorem~\ref{thmLowerBound} we have shown that \emph{exactly} describing a Voronoi cell of a lattice may require superpolynomial-size extended formulations.
It would be interesting to understand how this situation changes if we allow approximations instead of exact descriptions, in particular in view of various results on the complexity of the approximate closest vector problem, see, e.g., Aharonov \& Regev~\cite{aharonov2005lattice}.
To this end, for $\alpha \ge 1$ we say that a polytope~$Q$ is an \emph{$\alpha$-approximation} of a polytope $P$, if $P \subseteq Q \subseteq \alpha P$.

\begin{ques}
What can be said about extension complexities of $\alpha$-approximations of Voronoi cells of lattices?
\end{ques}

We have seen in Theorem~\ref{thm:xc-root-lattice} that not only the root lattices but also their dual lattices have polynomial extension complexity.
Is that a general phenomenon?

\begin{ques}
Given a $d$-dimensional lattice $\lat$, is there a polynomial relationship between $\xc(\vc(\lat))$ and $\xc(\vc(\lat^\star))$?
\end{ques}

Given that in view of Theorem~\ref{thm:zonotopal-lattices} zonotopal lattices admit lifts with quadratically many facets, and the fact that the closest vector problem on such lattices can be solved in polynomial time (see~\cite{mccormickpeisscheidweilervallentin2020apolynomial}), one might expect that small-sized lifts of the corresponding Voronoi cells can actually be constructed explicitly.

\begin{ques}
Given a basis of a $d$-dimensional zonotopal lattice $\lat$, is it possible to construct an explicit lift of $\vc(\lat)$ with polynomially many facets in polynomial time?
\end{ques}

Note that our arguments leading to Theorem~\ref{thm:zonotopal-lattices} are not constructive.

\section*{Acknowledgements}
The third author was supported by the Deutsche Forschungsgemein\-schaft (DFG, German Research Foundation), project number 451026932.
We like to thank Gennadiy Averkov, Daniel Dadush, and Christoph Hunkenschröder for valuable discussions on the topic.

\bibliographystyle{plain}
\bibliography{references}

\begin{thebibliography}{10}

\bibitem{aharonov2005lattice}
Dorit Aharonov and Oded Regev.
\newblock Lattice problems in {NP}\,$\cap$\,{coNP}.
\newblock {\em J. ACM}, 52(5):749--765, 2005.

\bibitem{aprile2019regular}
Manuel Aprile and Samuel Fiorini.
\newblock Regular matroids have polynomial extension complexity.
\newblock \url{https://arxiv.org/abs/1909.08539}, 2019.

\bibitem{balas1979disjunctive}
Egon Balas.
\newblock Disjunctive {P}rogramming.
\newblock In P.L. Hammer, E.L. Johnson, and B.H. Korte, editors, {\em Discrete
  Optimization II}, volume~5 of {\em Annals of Discrete Mathematics}, pages
  3--51. Elsevier, 1979.

\bibitem{birkhoff1946tres}
Garrett Birkhoff.
\newblock Tres observaciones sobre el algebra lineal.
\newblock {\em Univ. Nac. Tucuman, Ser. A}, 5:147--154, 1946.

\bibitem{bogomolov2015small}
Yuri Bogomolov, Samuel Fiorini, Aleksandr Maksimenko, and Kanstantsin
  Pashkovich.
\newblock Small extended formulations for cyclic polytopes.
\newblock {\em Discrete Comput. Geom.}, 53(4):809--816, 2015.

\bibitem{chan2016approximate}
Siu~On Chan, James~R Lee, Prasad Raghavendra, and David Steurer.
\newblock Approximate constraint satisfaction requires large {LP} relaxations.
\newblock {\em J. ACM}, 63(4):1--22, 2016.

\bibitem{conforti2013extended}
Michele Conforti, G{\'e}rard Cornu{\'e}jols, and Giacomo Zambelli.
\newblock Extended formulations in combinatorial optimization.
\newblock {\em Ann. Oper. Res.}, 204(1):97--143, 2013.

\bibitem{conforti2020extended}
Michele Conforti, Samuel Fiorini, Tony Huynh, and Stefan Weltge.
\newblock Extended formulations for stable set polytopes of graphs without two
  disjoint odd cycles.
\newblock In {\em International Conference on Integer Programming and
  Combinatorial Optimization}, pages 104--116. Springer, 2020.

\bibitem{conwaysloane1999splag}
John~H. Conway and Neil J.~A. Sloane.
\newblock {\em Sphere packings, lattices and groups}, volume 290 of {\em
  Grundlehren der Mathematischen Wissenschaften}.
\newblock Springer-Verlag, New York, third edition, 1999.

\bibitem{engelmichelsenechal2004lattice}
Peter Engel, Louis Michel, and Marjorie S\'{e}n\'{e}chal.
\newblock {\em Lattice geometry}.
\newblock preprint, 2004.

\bibitem{erdahl1999zonotopes}
Robert~M. Erdahl.
\newblock Zonotopes, dicings, and {V}oronoi's conjecture on parallelohedra.
\newblock {\em European J. Combin.}, 20(6):527--549, 1999.

\bibitem{faenza2012separating}
Yuri Faenza, Gianpaolo Oriolo, and Gautier Stauffer.
\newblock Separating stable sets in claw-free graphs via padberg-rao and
  compact linear programs.
\newblock In {\em Proceedings of the twenty-third annual ACM-SIAM symposium on
  discrete algorithms}, pages 1298--1308. SIAM, 2012.

\bibitem{fiorinikaibelpashkovichtheis2013combinatorial}
Samuel Fiorini, Volker Kaibel, Kanstantsin Pashkovich, and Dirk~Oliver Theis.
\newblock Combinatorial bounds on nonnegative rank and extended formulations.
\newblock {\em Discrete Math.}, 313(1):67--83, 2013.

\bibitem{fiorini2012linear}
Samuel Fiorini, Serge Massar, Sebastian Pokutta, Hans~Raj Tiwary, and Ronald
  De~Wolf.
\newblock Linear vs. {S}emidefinite {E}xtended {F}ormulations: {E}xponential
  {S}eparation and {S}trong {L}ower {B}ounds.
\newblock In {\em Proceedings of the forty-fourth annual ACM symposium on
  Theory of computing}, pages 95--106, 2012.

\bibitem{fiorini2015exponential}
Samuel Fiorini, Serge Massar, Sebastian Pokutta, Hans~Raj Tiwary, and Ronald~De
  Wolf.
\newblock Exponential lower bounds for polytopes in combinatorial optimization.
\newblock {\em J. ACM}, 62(2):1--23, 2015.

\bibitem{gerards1991compact}
Albertus M.~H. Gerards.
\newblock Compact systems for {T}-join and perfect matching polyhedra of graphs
  with bounded genus.
\newblock {\em Oper. Res. Lett.}, 10(7):377--382, 1991.

\bibitem{goemans2015smallest}
Michel~X. Goemans.
\newblock Smallest compact formulation for the permutahedron.
\newblock {\em Math. Program.}, 153:5--11, 2015.

\bibitem{goldstein2003equidistribution}
Daniel Goldstein and Andrew Mayer.
\newblock On the equidistribution of hecke points.
\newblock In {\em Forum Mathematicum}, volume~15, pages 165--189. De Gruyter,
  2003.

\bibitem{GJW}
Mika G\"o\"os, Rahul Jain, and Thomas Watson.
\newblock Extension complexity of independent set polytopes.
\newblock {\em SIAM J. Comput.}, 47(1):241--269, 2018.

\bibitem{gouveia2013lifts}
Joao Gouveia, Pablo~A Parrilo, and Rekha~R Thomas.
\newblock Lifts of convex sets and cone factorizations.
\newblock {\em Math. Oper. Res.}, 38(2):248--264, 2013.

\bibitem{grandepadrolsanyal2018extension}
Francesco Grande, Arnau Padrol, and Raman Sanyal.
\newblock Extension {C}omplexity and {R}ealization {S}paces of
  {H}ypersimplices.
\newblock {\em Discrete Comput. Geom.}, 59:621--642, 2018.

\bibitem{gruber2007convex}
Peter~M. Gruber.
\newblock {\em Convex and discrete geometry}, volume 336 of {\em Grundlehren
  der Mathematischen Wissenschaftens}.
\newblock Springer, Berlin, 2007.

\bibitem{heller1957onlinear}
Isidore Heller.
\newblock On linear systems with integral valued solutions.
\newblock {\em Pacific J. Math.}, 7(3):1351--1364, 1957.

\bibitem{hunkenschroederreulandschymura2020compact}
Christoph Hunkenschr\"oder, Gina Reuland, and Matthias Schymura.
\newblock On compact representations of {V}oronoi cells of lattices.
\newblock {\em Math. Program.}, 183(1):337--358, 2020.

\bibitem{hunkenschroderphd}
Christoph Hunkenschröder.
\newblock {\em New Results in Integer and Lattice Programming}.
\newblock PhD thesis, EPFL, Lausanne, 2020.

\bibitem{kaibel2015short}
Volker Kaibel and Stefan Weltge.
\newblock A short proof that the extension complexity of the correlation
  polytope grows exponentially.
\newblock {\em Discrete Comput. Geom.}, 53(2):397--401, 2015.

\bibitem{kannan1987minkowski}
Ravi Kannan.
\newblock Minkowski's convex body theorem and integer programming.
\newblock {\em Math. Oper. Res.}, 12(3):415--440, 1987.

\bibitem{korkinezolotarev1877sureles}
Aleksandr~N. Korkine and Egor~I. Zolotarev.
\newblock Sur les formes quadratiques positives.
\newblock {\em Math. Ann.}, 11:242--292, 1877.

\bibitem{lee2015lower}
James~R. Lee, Prasad Raghavendra, and David Steurer.
\newblock Lower bounds on the size of semidefinite programming relaxations.
\newblock In {\em Proceedings of the forty-seventh annual ACM symposium on
  Theory of computing}, pages 567--576, 2015.

\bibitem{martin1991using}
R.~Kipp Martin.
\newblock Using separation algorithms to generate mixed integer model
  reformulations.
\newblock {\em Oper. Res. Lett.}, 10(3):119--128, 1991.

\bibitem{martinet2003perfect}
Jacques Martinet.
\newblock {\em Perfect lattices in {E}uclidean spaces}, volume 327 of {\em
  Grundlehren der Mathematischen Wissenschaften}.
\newblock Springer-Verlag, Berlin, 2003.

\bibitem{mccormickpeisscheidweilervallentin2020apolynomial}
S.~Thomas McCormick, Britta Peis, Robert Scheidweiler, and Frank Vallentin.
\newblock A polynomial time algorithm for solving the closest vector problem in
  zonotopal lattices.
\newblock \url{https://arxiv.org/abs/2004.07574}, 2020.

\bibitem{micciancio2013deterministic}
Daniele Micciancio and Panagiotis Voulgaris.
\newblock A deterministic single exponential time algorithm for most lattice
  problems based on {V}oronoi cell computations.
\newblock {\em SIAM J. Comput.}, 42(3):1364--1391, 2013.

\bibitem{moodypatera1992voronoi}
Robert~V. Moody and Ji\v{r}\'{i} Patera.
\newblock Voronoi and {D}elaunay cells of root lattices: classification of
  their faces and facets by {C}oxeter-{D}ynkin diagrams.
\newblock {\em J. Phys. A: Math. Gen.}, 25(19):5089--5134, 1992.

\bibitem{pulleyblank1993formulations}
William~R. Pulleyblank and Bruce Shepherd.
\newblock Formulations for the stable set polytope.
\newblock In {\em Proceedings Third IPCO Conference}, pages 267--279, 1993.

\bibitem{rajanshende1996acharacterization}
Dayanand~S. Rajan and Anil~M. Shende.
\newblock A characterization of root lattices.
\newblock {\em Discrete Math.}, 161(1):309--314, 1996.

\bibitem{rothvoss2013some}
Thomas Rothvo{\ss}.
\newblock Some 0/1 polytopes need exponential size extended formulations.
\newblock {\em Math. Program.}, 142(1):255--268, 2013.

\bibitem{rothvoss2017matching}
Thomas Rothvo{\ss}.
\newblock The matching polytope has exponential extension complexity.
\newblock {\em J. ACM}, 64(6):1--19, 2017.

\bibitem{schurmann2009computational}
Achill Sch\"urmann.
\newblock {\em {Computational Geometry of Positive Definite Quadratic Forms:
  Polyhedral Reduction Theories, Algorithms, and Applications}}, volume~48 of
  {\em University lecture series}.
\newblock American Mathematical Society, 2009.

\bibitem{shitov2014sublinear}
Yaroslav Shitov.
\newblock Sublinear extensions of polygons.
\newblock \url{https://arxiv.org/abs/1412.0728}, 2014.

\bibitem{siegel1945mean}
Carl~Ludwig Siegel.
\newblock A mean value theorem in geometry of numbers.
\newblock {\em Ann. Math.}, pages 340--347, 1945.

\bibitem{van1981another}
Peter van Emde~Boas.
\newblock Another {NP}-complete problem and the complexity of computing short
  vectors in a lattice.
\newblock {\em Technical Report, Department of Mathmatics, University of
  Amsterdam}, 1981.

\bibitem{von1953certain}
John von Neumann.
\newblock A certain zero-sum two-person game equivalent to the optimal
  assignment problem.
\newblock In Harold~William Kuhn and Albert~William Tucker, editors, {\em
  Contributions to the Theory of Games}, volume~2 of {\em Annals of Mathematics
  Studies 28}, pages 5--12. Princeton University Press, 1953.

\bibitem{weltge2015diss}
Stefan Weltge.
\newblock {\em Sizes of {L}inear {D}escriptions in {C}ombinatorial
  {O}ptimization}.
\newblock PhD thesis, Otto-von-Guericke-Universit{\"a}t Magdeburg, 2015.

\bibitem{wong1980integer}
Richard~T. Wong.
\newblock Integer programming formulations of the traveling salesman problem.
\newblock In {\em Proceedings of the IEEE international conference of circuits
  and computers}, pages 149--152. IEEE Press Piscataway NJ, 1980.

\bibitem{yannakakis1991expressing}
Mihalis Yannakakis.
\newblock Expressing combinatorial optimization problems by linear programs.
\newblock {\em J. Comput. System Sci.}, 43(3):441--466, 1991.

\end{thebibliography}

\end{document}